\documentclass[11pt]{amsart}

\usepackage{float}
\usepackage{macros}
\parskip=10pt
\date{}

\begin{document}
\title{Integrable lattice models from four-dimensional field theories}
\author{Kevin Costello}
\thanks{Partially supported by NSF grant DMS 1007168, by a Sloan Fellowship, and by a Simons Fellowship in Mathematics.}

\address{Department of Mathematics, Northwestern University.}
\email{costello@math.northwestern.edu}

\maketitle
\begin{abstract}
This paper gives a general construction of an integrable lattice model (and a solution of the Yang-Baxter equation with spectral parameter) from a four-dimensional field theory which is a mixture of topological and holomorphic.  Spin-chain models arise in this way from a twisted, deformed version of $N=1$ gauge theory.  This note is  based on the longer paper arXiv:1303.2632.
\end{abstract}
\renewcommand{\L}{\mathscr{L}}

\section{Introduction}
Integrable lattice models have a long and fruitful history in physics, dating back to Heisenberg's work on the XXX model. Integrability of the XXX and related models was proved by Baxter, Bethe, Yang and others in the 60's and 70's.  A key insight of this work is that integrability follows from the fact that the vertex interaction of the model -- encoded by the $R$-matrix -- satisfies the Yang-Baxter equation. 

If we take an integrable model and perturb it a small amount, it will typically no longer be integrable.  The physical properties of the perturbed model will be essentially identical, however.  One can therefore ask the following question: where do integrable models come from? \footnote{This paragraph paraphrases some comments made by Okounkov in a lecture in 2013.} 

In this note (which is a summary of the long paper \cite{Cos13}) I propose the following answer: integrable models arise from four-dimensional field theories which are is topological in two real directions and holomorphic in one complex direction.  I show that every such field theory, equipped with some line operators in the topological directions, leads to a two-dimensional integrable lattice model.  The correspondence is as follows.
\begin{enumerate}
\item The partition function of the lattice model is equal to the expectation value of a configuration of line operators on a product of a Riemann surface $\Sigma$ and a topological two-torus $T^2$.
\item The Hilbert space of the lattice model is the Hilbert space of the field theory on a Riemann surface $\Sigma$ times a topological $S^1$, in the presence of line operators which end at points on the circle.
\item The transfer matrix is the operator on the Hilbert space associated to $\Sigma \times S^1$ arising from a line operator parallel to the $S^1$.
\item The spectral parameter is a meromorphic parameter on $\Sigma$.
\item The Boltzmann weights (or $R$-matrix) of the lattice model arises from the operator product expansion of line operators.
\end{enumerate}
Kapustin showed that any $N=2$ field theory admits a twist of this form.  I showed in \cite{Cos13} that $N=1$ pure gauge theory can be deformed and twisted to yield a theory of this form.  This deformed $N=1$ gauge theory has a Wilson operator invariant under the supercharge we use to twist. The main result of \cite{Cos13}, which I sketch here, states that the integrable lattice model associated to a twisted, deformed $N=1$ gauge theory, with gauge group $SU(n)$ and Wilson operator in a representation $V$ of $SU(n)$, is the higher spin-chain system associated to the $SU(n)$ representation $V$. (Thus, in the case that $n = 2$ and $V$ is the fundamental representation, we find the Heisenberg $XXX$ model).  

The result generalises to other semi-simple gauge groups: however, for $G \neq SU(n)$, the Wilson operator associated to a $G$-representation $V$ may have a quantum anomaly.  This anomaly occurs if $V$ can not be lifted to a representation of the Yangian $Y(\g)$ of the Lie algebra $\g$ of $G$. 

Kapustin's holomorphic/topological twist of $N=2$ gauge theories admits both Wilson and t'Hooft operators.    The construction of this paper, applied to Kapustin's twist, will yield an integrable lattice model associated to any $N=2$ theory, whose partition function is the expectation value of a configuration of Wilson and t'Hooft operators. 

There are several other known relationships between integrable lattice models and four-dimensional field theories. One was introduced by Nekrasov and Shatashvili in \cite{NekSha09}, and developed mathematically by Maulik and Okounkov in the beautiful paper \cite{MauOko12}.  It seems that these two connections between field theories and integrable systems are completely different. Indeed, Nekrasov and Shatashvili show that the spin-chain system for an ADE group $G$ is associated to the $N=2$ quiver gauge theory with ADE quiver corresponding to $G$, whereas in this paper we find that the same spin-chain system arises from the $N=1$ gauge theory with gauge group $G$.    

Another relationship between integrable systems and gauge theories was developed by Yamazaki in \cite{Yam13}.  Again, this appears to be different from the relationship developed here, in that in Yamazaki's work the Yang-Baxter equation is derived from Seiberg duality, whereas here the Yang-Baxter equation is much easier to derive. 

Of course, there is also the well-known connection between $N=4$ gauge theory and the Yangian (see e.g. \cite{Fer11}).  This as also, as far as I know, unrelated to the results of this note.

\subsection{Acknowledgements}
I'd like to thank Kolya Reshitikhin, Nick Rozenblyum, Josh Shadlen, and Edward Witten for helpful conversations. 

\section{Integrable lattice models}

In this section, I will define the concept of integrable lattice model from the vertex-model point of view (i.e. from the discrete version of the path-integral approach to quantum field theory). 

Let $V, W$ be finite-dimensional vector spaces. Let 
$$
\check{R}: V \otimes W \to W \otimes V
$$
be a linear map.  From this data, we will construct a two-dimensional discrete lattice model. 

\begin{example}
The Heisenberg XXX-model (or 6-vertex model) has $V = W = \C^2$, and $\check{R}$-matrix defined by
$$
\check{R} (v \otimes w) = w \otimes v + \tfrac{1}{z} c (w \otimes v) 
$$
where $c \in \sl_2 \otimes \sl_2$ is the quadratic Casimir, and $z \in \C^\times$ is a parameter called the spectral parameter. 

Note that some authors write the Heisenberg model in terms of 
$$R = \sigma \circ \check{R},$$
where $\sigma:  W \otimes V \to V \otimes W$ is the isomorphism which interchanges the factors.
 \end{example}

\subsection{}
Suppose that $L$ is an $n \times m$ doubly periodic lattice. Thus, $L$ is the quotient of the standard infinite square lattice in $\R^2$ with vertices $\Z \times \Z$ by the subgroup $n \Z \times m \Z \subset \Z \times \Z$, acting by translation.  

Choose a basis $e_i$ of $V$ and $f_j$ of $W$.
\begin{definition}
A \emph{configuration} of the lattice model is a way of labelling every horizontal edge of $L$ by a basis element of $W$, and every vertical element by a basis element of $V$.
\end{definition}
Suppose that we have a configuration $\sigma$ on $L$, and a vertex $v$ of $L$. Suppose that, in this configuration, the edges incident to the vertex $v$ are labelled by basis elements $e_{i_1},e_{i_2}$ of $V$ and $f_{j_1}, f_{j_2}$ of $W$. We define $\check{R}(v,\sigma)$ to be the matrix element
$$
\check{R}(v,\sigma) = \check{R}_{i_1,i_2,j_1,j_2}
$$
associated to these basis vectors. This is the Boltzmann weight of the lattice model at the vertex.   Thus, we should picture the matrix $\check{R}$ as in figure \ref{fig:interaction}.
\begin{figure}
\includegraphics[scale=1]{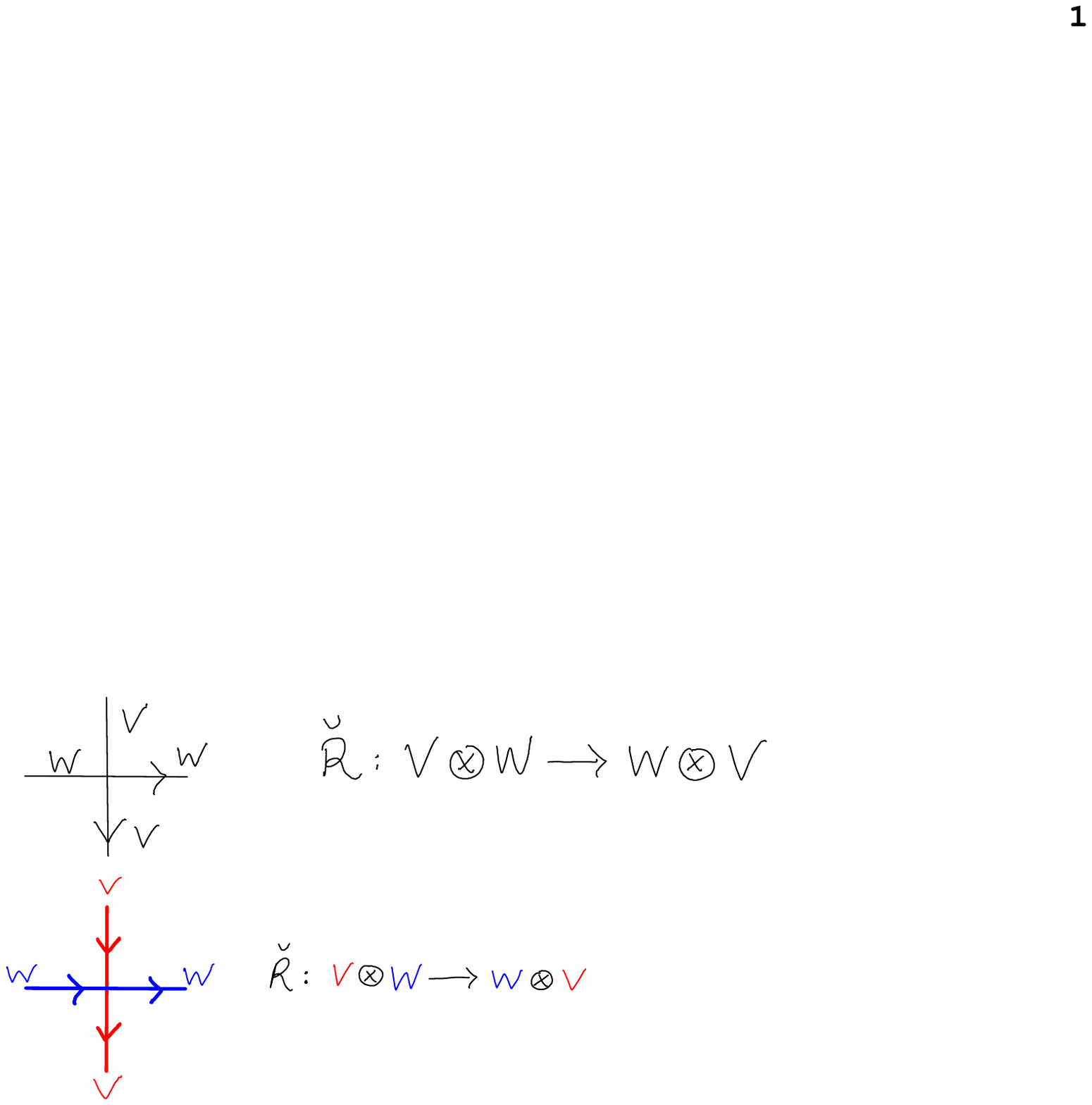}
\caption{The $\check{R}$-matrix. \label{fig:interaction}}
\end{figure}
 
\begin{definition}
The \emph{partition function} of the lattice model is defined by
$$
Z(L, \check{R}) = \sum_{\text{configurations } \sigma} \prod_{\text{vertices } v} \check{R}(v,\sigma).
$$
\end{definition}

\subsection{}
We can re-express our lattice model in the Hamiltonian formalism.
\begin{definition}
The \emph{Hilbert space} of the lattice model is $V^{\otimes n}$.
\end{definition}
\begin{definition}
The \emph{transfer matrix} $T : V^{\otimes n} \to V^{\otimes n}$ is defined as follows.  Let us view $\check{R}$ as an element of $\op{End}(V) \otimes \op{End}(W)$.  Then, $\check{R}^{\otimes n}$ is an element of $\op{End}(V)^{\otimes n} \otimes \op{End}(W)^{\otimes n}$.  We can apply the $W$-composition map
$$
\op{End}(W)^{\otimes n} \to \op{End}(W)
$$
to get an element $\check{R} \circ_W \dots \circ_W \check{R} \in \op{End}(V)^{\otimes n} \otimes \op{End}(W)$. Finally, we can take the trace over $W$ to get an element
$$
T = \op{Tr}_W \left(\check{R} \circ_W \dots \circ_W \check{R} \right) \in \op{End}(V)^{\otimes n} = \op{End}(V^{\otimes n}).
$$
\end{definition}
The transfer matrix is illustrated in figure \ref{fig:transfer}.
\begin{figure}
\includegraphics{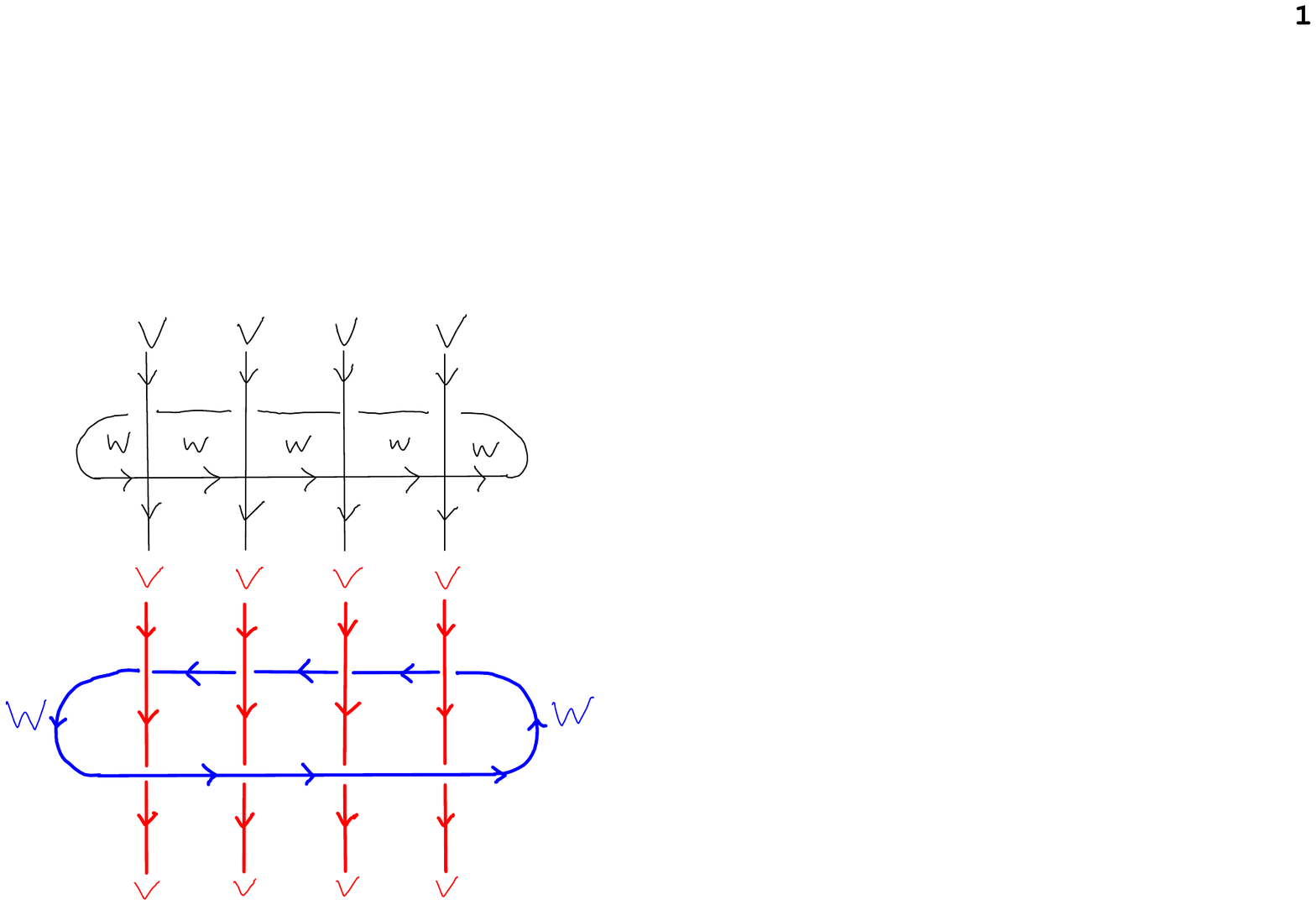}
\caption{The transfer matrix. The horizontal line has been closed into a circle because we are taking the trace of the operator in that direction. \label{fig:transfer}}
\end{figure}
\begin{lemma}
The partition function can be expressed in terms of the transfer matrix by
$$
Z(L, \check{R} ) = \op{Tr}_{V^{\otimes n}} T^m. 
$$
\end{lemma}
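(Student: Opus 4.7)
The plan is to unwind both sides in the chosen basis and check that they reproduce the same sum over edge labels, with the two trace operations exactly implementing the two periodic boundary conditions of the torus $L$.

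First I would identify how a single row of the lattice is encoded by $T$. Label the columns by $k=1,\dots,n$ and view the matrix element of $\check{R}$ between basis vectors as $\check R^{a'b'}_{\;\;ab}$, where the lower indices $a\in\{e_i\}$ and $b\in\{f_j\}$ label the incoming vertical and horizontal edges and the upper indices label the outgoing ones. Then $\check R^{\otimes n}\in\op{End}(V)^{\otimes n}\otimes \op{End}(W)^{\otimes n}$ has matrix entries
\[
\prod_{k=1}^{n}\check R^{a'_k b'_k}_{\;\;a_k b_k}.
\]
The $W$-composition map identifies $b'_k$ with $b_{k+1}$ and sums over the common index, so the entry of $\check R\circ_W\cdots\circ_W\check R$ between $(a_k,a'_k)$ on the $V$ factors and $(b_1,b'_n)$ on the remaining $W$ factor is
\[
\sum_{b_2,\dots,b_n}\prod_{k=1}^{n}\check R^{a'_k b'_k}_{\;\;a_k b_k},\qquad b'_k=b_{k+1}.
\]
Taking $\op{Tr}_W$ sets $b'_n=b_1$ and sums, which is exactly the periodic boundary condition in the horizontal direction. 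So the matrix entry $T^{a'_1\cdots a'_n}_{\;\;a_1\cdots a_n}$ is precisely the sum over all horizontal-edge labellings of one row of the product of Boltzmann weights at the $n$ vertices in that row, with fixed vertical-edge labels $a_k$ below and $a'_k$ above.

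Next I would stack rows. Writing $T^m$ as a matrix product introduces summation indices $a^{(r)}_k$ for $r=1,\dots,m-1$ between consecutive factors of $T$; these indices sit on the vertical edges between row $r$ and row $r+1$, and summing over them is exactly the sum over all vertical-edge labels in the interior of the lattice. Finally $\op{Tr}_{V^{\otimes n}}$ identifies the vertical labels below the bottom row with those above the top row and sums, implementing the periodic boundary condition in the vertical direction. Combining the two trace identifications with the previous step exhibits
\[
\op{Tr}_{V^{\otimes n}}T^m=\sum_{\sigma}\prod_v \check R(v,\sigma),
\]
which is $Z(L,\check R)$ by definition.

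The only genuine obstacle is notational: being consistent about which tensor factor of $V$ and $W$ corresponds to which edge of $L$, and verifying that the two composition/trace operations produce the horizontal and vertical periodicities in the correct order. Once the matrix-element dictionary is set up, the identity is tautological.
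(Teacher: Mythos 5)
Your proof is correct and follows the same route as the paper: the paper's proof simply says that writing out $\op{Tr}_{V^{\otimes n}}T^m$ explicitly in the chosen bases of $V$ and $W$ reproduces the defining sum over configurations, and your argument is exactly that computation carried out in detail (row transfer matrix from the $W$-composition and $\op{Tr}_W$, vertical periodicity from $\op{Tr}_{V^{\otimes n}}T^m$). No gaps; you have just made explicit what the paper leaves as a standard bookkeeping exercise.
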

\begin{proof}
This is a standard lemma in the theory of vertex models.  The point is that the transfer matrix is a certain trace of compositions of the operator $\check{R}$. If we write out the right hand side of this equality explicitly using the basis we have chosen for $V$ and $W$ we find precisely the expression for the partition function we wrote down earlier.
\end{proof}
\begin{remark}
In the usual terminology, the transfer matrix is not the same as the Hamiltonian, but one can be expressed in terms of the other.  Roughly, the Hamiltonian is the log of the transfer matrix.
\end{remark}

\section{Integrability}
Next, we will say what it means for a lattice model as above to be integrable.  The basic idea is that a lattice model is integrable if there are an infinite number of operators on the Hilbert space  $V^{\otimes n}$ which commute with the transfer matrix.  We will consider a more precise form of integrability, however. 

Suppose that the matrix $\check{R}$ depends holomorphically on a complex parameter $z$ taking values in a Riemann surface $\Sigma$. The parameter $z$ is called the spectral parameter. ($\Sigma$ can be non-compact; for spin-chain systems, $\Sigma = \mbb{CP}^1 \setminus\{0\}$.) Then, the transfer matrix will also depend on $z$, so that we get a one-parameter family of matrices 
$$T(z) : V^{\otimes n} \to V^{\otimes n}.$$
\begin{definition}
The lattice model is \emph{integrable} if 
$$
[T(z), T(z')] = 0
$$
for all $z,z' \in \Sigma$. 
\end{definition}
If we fix one value $z_0$ of $z$, we see that each Taylor term of $T(z)$ expanded around $z_0$ commutes with $T(z_0)$.  In this way we find an infinite number of operators commuting with $T(z_0)$.

One can ask if there's a condition on $\check{R}$ which implies that the transfer matrices for different values of $z$ commute.  There is: this condition is called the \emph{Yang-Baxter equation}.  

To phrase this condition in the way that appears in field theory, we need to be slightly more general. Suppose that ${V}$, ${W}$ are holomorphic vector bundles on a Riemann surface $\Sigma$.  We will denote the fibres at points $p,q \in \Sigma$ by ${V}_p$ and ${W}_q$.  

Suppose that, for each $p,q \in \Sigma$, with $p \neq q$, we have an isomorphism
$$
\check{R}({V}_p,{W}_q) : {V}_p \otimes {W}_q \to {W}_q \otimes {V}_p.
$$
Suppose that this isomorphism varies holomorphically with 
$$
(p,q) \in \Sigma \times \Sigma \setminus \Diag.
$$
It makes sense to ask that $\check{R}(p,q)$ varies holomorphically, because the vector bundles ${V}$ and ${W}$ are holomorphic.

The transfer matrix 
$$T(p,q) : {V}_p^{\otimes n} \to {V}_p^{\otimes n}$$
defined using $\check{R}({V}_p, {W}_q)$ will also vary holomorphically with $p$ and $q$ (which must be distinct).  Thus, fixing $p$, $T(p,q)$ is a holomorphic map from $\Sigma \setminus \{p\}$ to $\op{End} ({V}_p^{\otimes n})$. As above, we say that the theory is integrable if, for fixed $p$, we have
$$
[T(p,q), T(p,q')] = 0.
$$

We will see that integrability will follow from some extra data related to $\check{R}({V}_p, {W}_q)$.  Suppose that we have a linear map
$$
\check{R}({W}_p, {W}_q) : {W}_p \otimes {W}_q \iso {W}_q \otimes {W}_p
$$
defined for each $p \neq q \in \Sigma$. Again, we assume that $\check{R}({W}_p, {W}_q)$ varies holomorphically with $p$ and $q$. 
\begin{figure}
\includegraphics[scale=0.8]{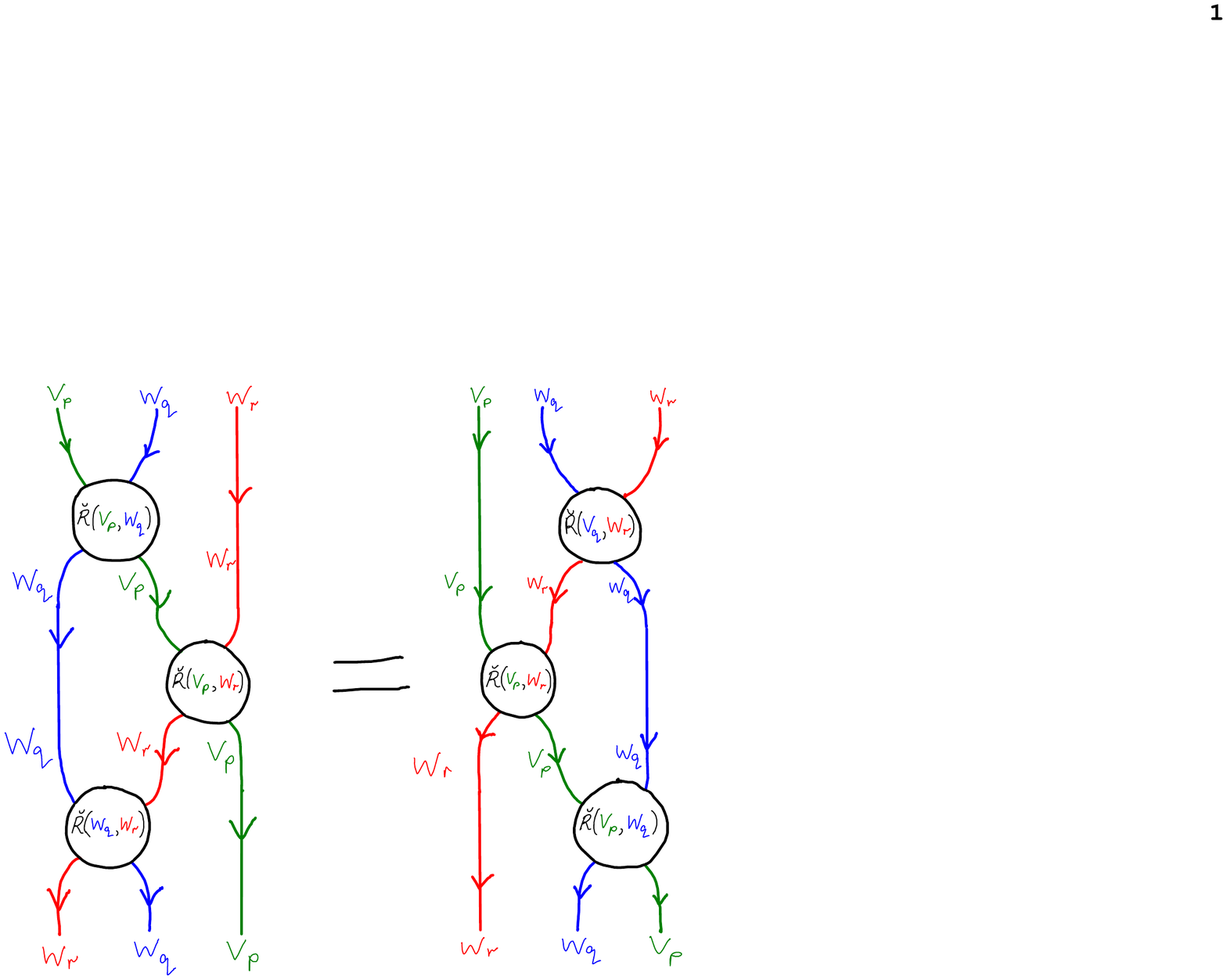}
\caption {The Yang-Baxter equation. \label{fig:YBE}}
\end{figure}

\begin{definition}
$\check{R}({V}_p, {W}_q)$ and $\check{R}({W}_p, {W}_q)$ satisfy the \emph{Yang-Baxter equation} if for every triple of distinct points $p,q,r \in \Sigma$ we have
$$
\check{R}({W}_q, {W}_r)\check{R}({V}_p, {W}_r)\check{R}({V}_p, {W}_q) = \check{R}({V}_p, {W}_q)\check{R}({V}_p, {W}_r)\check{R}({W}_q, {W}_r).
$$
\end{definition}

\begin{remark}
Normally, the Yang-Baxter equation is written in terms of $R = \sigma \circ \check{R}$, which is why the equation above might look slightly different to what is usually written in the literature. 
\end{remark}
The Yang-Baxter equation is illustrated in figure \ref{fig:YBE}. One can show quite easily that the Yang-Baxter equation, together with the statement that 
$$\check{R}({W}_p, {W}_q) \check{R}({W}_q,{W}_p) = 1$$
implies integrability.

\section{Integrable models from four-dimensional field theories}
In this section I will explain how to construct an integrable model from a four-dimensional field theory which is topological in 2 real directions and holomorphic in 1 complex direction.

The reader should be aware that what I mean by topological is a little weaker than what some authors mean. For the purposes of this paper, a two-dimensional topological field theory is a field theory which can be defined on \emph{framed} topological surfaces.  According to Lurie's \cite{Lur09} classification, such topological field theories correspond to (smooth and proper) dg categories.  This should be compared to the case of TFTs defined on oriented surfaces: according to \cite{Cos07a,Lur09} these are classified by smooth and proper Calabi-Yau categories.  

A similar remark applies to my use of the term ``holomorphic field theory''.  Only a weak version of this concept is required: we do not require that the theory is invariant under the Virasoro algebra, or that it can be defined on an arbitrary Riemann surface.  It is enough to have a theory which is only invariant under translation on $\C$, and so can be defined on Riemann surfaces equipped with a nowhere-vanishing holomorphic one-form.  

Suppose we have a field theory on $\R^4 = \C^2$. (By ``field theory'' we could mean classical field theory, specified by some Lagrangian; or quantum field theory, encoded by a factorization algebra as in \cite{CosGwi11}).  Suppose this field theory is translation-invariant.  We say a field theory is \emph{topological} if the action of the Abelian Lie algebra $\R^4$ on our theory is homotopically trivial (i.e. BRST exact).   Similarly, we say a field theory is \emph{holomorphic} if the action of the Abelian Lie algebra spanned by $\dpa{\zbar}, \dbar{\br{w}}$ in the complexified translation algebra is homotopically trivial.  We say a theory is mixed holomorphic/topological if the action of the $3$-dimensional Abelian Lie algebra spanned by $\dpa{w}, \dpa{\br{w}}, \dpa{\zbar}$ is homotopically trivial.  In this situation, the theory is topological in the $w$-plane and holomorphic in the $z$-plane.  

One can ask where holomorphic/topological field theories come from.  One way they arise is by twisting supersymmetric theories (mathematicians might consult \cite{Fre99,DelFre99,Cos11b} for a primer on supersymmetry and twisting).  

Suppose we have a supersymmetric theory on $\R^4$ in Euclidean signature.  Thus, the theory is acted on by a complex super-translation Lie algebra of the form $\pi S \oplus \C^4$, where $S$ is a complex spin representation of $\op{Spin}(4)$ and $\C^4$ is the complexification of the vector representation. The Lie bracket is given by a complex-linear symmetric and $\op{Spin}(4)$-equivariant map 
$$
\Gamma : S \otimes S \to \C^4.
$$
Suppose we have some $Q \in S$ with the property that $[Q,Q] = 0$. Then, we can consider the twisted field theory, defined by adding $Q$ to the BRST operator of the theory.  Observables of the twisted theory are the $Q$-cohomology of observables of the original theory\footnote{I'm glossing over the role of $R$-symmetry in this story, which is needed to ensure the twisted theory is $\Z$-graded.  See e.g. \cite{Cos11b} for more details on twisting. In most treatments, an important part of the twisting procedure is to change the action of $\op{Spin}(4)$ on the theory so that the supercharge $Q$ is invariant.  This step is not important for the present discussion, because we are interested in field theories which can not be made $\op{Spin}(4)$-invariant.}.

The twisted theory has the property that any translation vector which can be written as $[Q,Q']$ for some $Q' \in S$ is homotopically trivial, the homotopy being given by $Q'$. 

In particular, if $\Im Q$ is spanned by $\dpa{\zbar}, \dpa{w}, \dpa{\br{w}}$, then we have a topological/holomorphic theory.   In \cite{Kap06}, Kapustin observes that any $N=2$ theory admits a supercharge with this property. Thus, any $N=2$ theory admits a holomorphic/topological twist.  (In his paper, Kapustin focuses on theories which have finite $\beta$-function, because he considers theories which are topological in the stronger sense that they can be defined on oriented topological surfaces and not just framed topological surfaces.) 

Another construction of holomorphic/topological theories, which is more relevant to this paper, arises from $N=1$ gauge theory.  As shown in \cite{Cos13}, the $N=1$ pure gauge theory admits a deformation such that the deformed theory is still acted on by one supercharge $Q \in \mc{S}_+$, and that if we twisted this deformed theory using this supercharge, we find a theory which is holomorphic/topological.  We will discuss this deformation in detail later. 

\subsection{}
Now, suppose we have a holomorphic/topological theory, which is defined on four-manifolds of the form $\Sigma_{hol} \times \Sigma_{top}$, where the $\Sigma_{hol}$ is a Riemann surfaces and $\Sigma_{top}$ is a smooth surface. Depending on the details of the theory, there may be some restrictions on the topology of these surfaces, but all holomorphic/topological theories can be defined on such four-manifolds where $\Sigma_{top}$ is framed and $\Sigma_{hol}$ is equipped with a nowhere-vanishing holomorphic $1$-form. 

Let us suppose that this theory is equipped with two line operators in the topological direction.

The general yoga of topological field theory (\cite{Lur09}, \cite{Cos07a}) tells us that we should assign to the Riemann surface $\Sigma_{hol}$ a category $\mc{C}(\Sigma_{hol})$. 

We will often assume that $\mc{C}(\Sigma_{hol})$ is the category of vector spaces.  Without this assumption, the integrable system we construct will be a kind of generalized integrable system.
\begin{figure}
\includegraphics[scale=1.2]{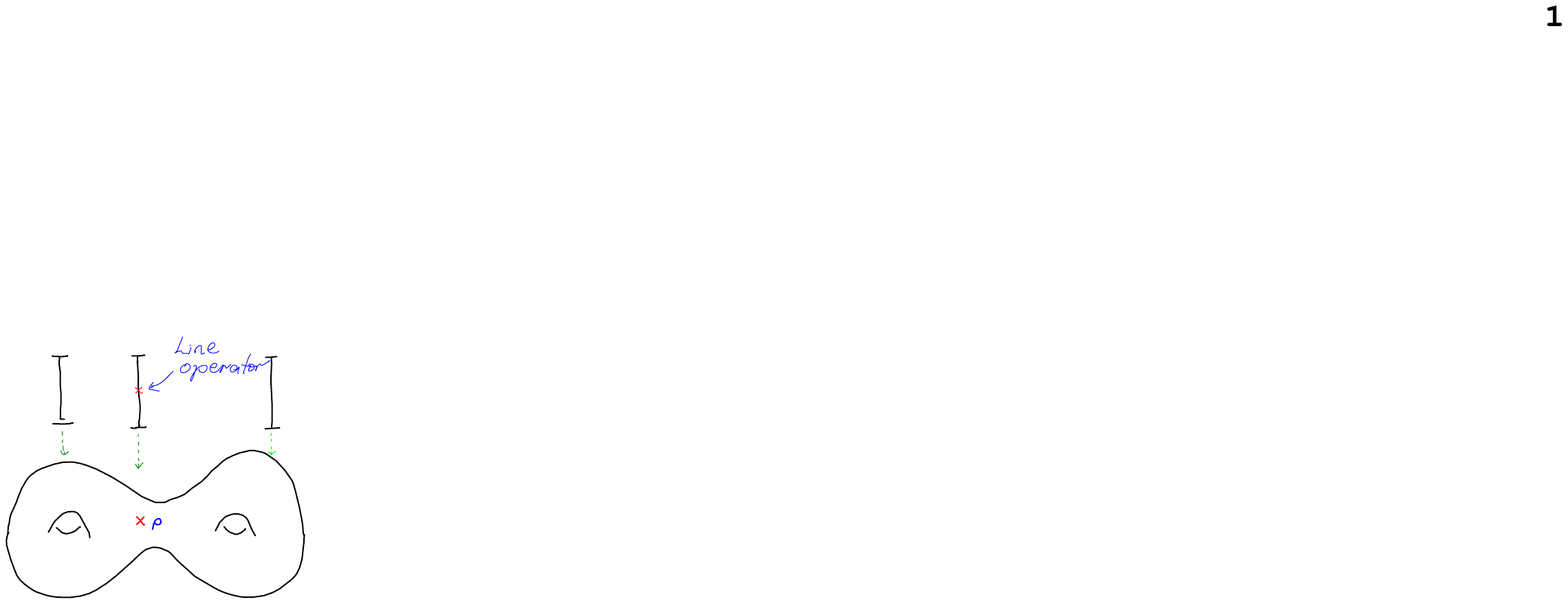}
\caption{The endpoint of a line operator is located at the centre of thee interval above $p$. Intervals above other points contain no line operators. This configuration yields the functor $\mc{F}_p$.\label{fig:line operator}}
\end{figure}
For a point $p \in \Sigma_{hol}$, we can put the end-point of a line operator at the point $(p,\tfrac{1}{2})$ in $\Sigma_{hol} \times [0,1]$, as in figure \ref{fig:line operator}.  This leads to a functor
$$
\mc{F}_p : \mc{C}(\Sigma_{hol}) \to \mc{C}(\Sigma_{hol}).
$$
In the case that $\mc{C}(\Sigma_{hol})$ is $\op{Vect}$, the functor $\mc{F}_p$ is given by tensor product with a vector space $V_p$. 

Suppose that $q \in \Sigma_{hol}$ is a point with $q \neq p$. If we place a different line operator in the fibre above $q$, we get a functor $\mc{G}_q : \mc{C}(\Sigma_{hol}) \to \mc{C}(\Sigma_{hol})$. In the case that $\mc{C}(\Sigma_{hol})$ is $\op{Vect}$, this functor is given by tensoring with a vector space $W_q$. 

Consider $\Sigma_{hol} \times I \times I$.  Placing one line operator in the interval $p \times \tfrac{1}{2} \times I$, and another on the interval $q \times I \times \tfrac{1}{2}$, as in figure \ref{fig:square}, leads to a natural isomorphism
$$
\mc{F}_p \circ \mc{G}_q \iso \mc{G}_q \circ \mc{F}_p.
$$
\begin{figure}
\includegraphics{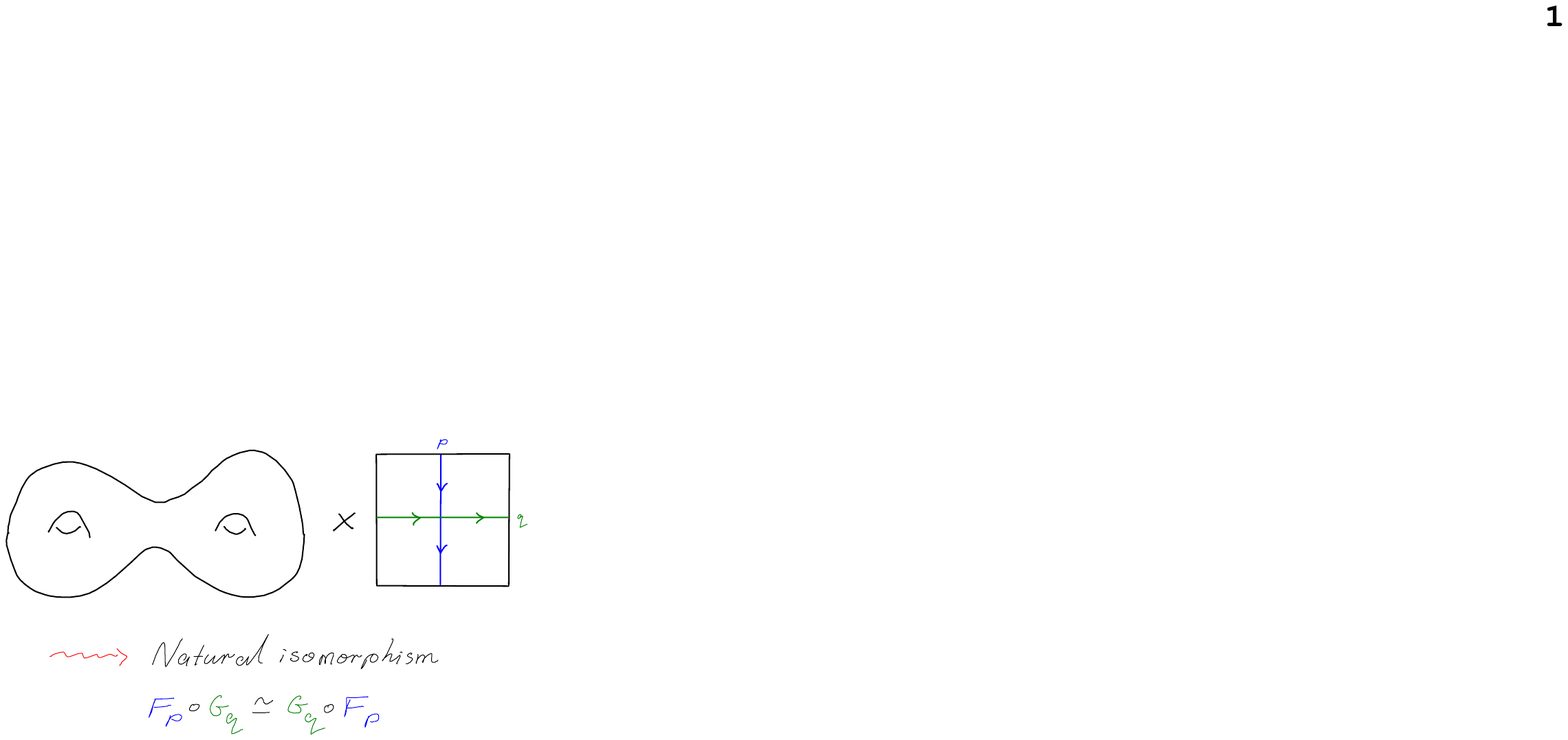}
\caption{The blue vertical line is located at $p \in \Sigma_{hol}$, and is labelled by one line operator. The green horizontal line is located at $q$ and labelled by another line operator. This configuration yields the natural isomorphism shown.
\label{fig:square}}
\end{figure}
In the case that $\mc{C}(\Sigma_{hol})$ is $\op{Vect}$, it leads to an isomorphism
$$
\check{R}(p,q): V_p \otimes W_q \iso W_q \otimes V_p.
$$
This isomorphism will be the $R$-matrix (or Boltzmann weights) of the integrable lattice model. 

\subsection{The Hilbert space and the transfer matrix}
Consider our theory on $\Sigma_{hol} \times S^1$.  Let us place the end points of $n$ line operators at the points $p \times 2 \pi k / n$, where $k$ ranges from $1$ to $n$, as in figure \ref{fig:hilbert space}.  The rules of topological field theory tell us that the Hilbert space of the theory, in the presence of these line operators, is the Hochschild homology group
$$
HH_\ast ( \mc{C}(\Sigma_{hol}, \mc{F}_p \circ \dots \circ \mc{F}_p))
$$
of the category $\mc{C}(\Sigma_{hol})$ with coefficients with the composition of $n$ copies of the functor $\mc{F}_p$.

\begin{figure}
\includegraphics{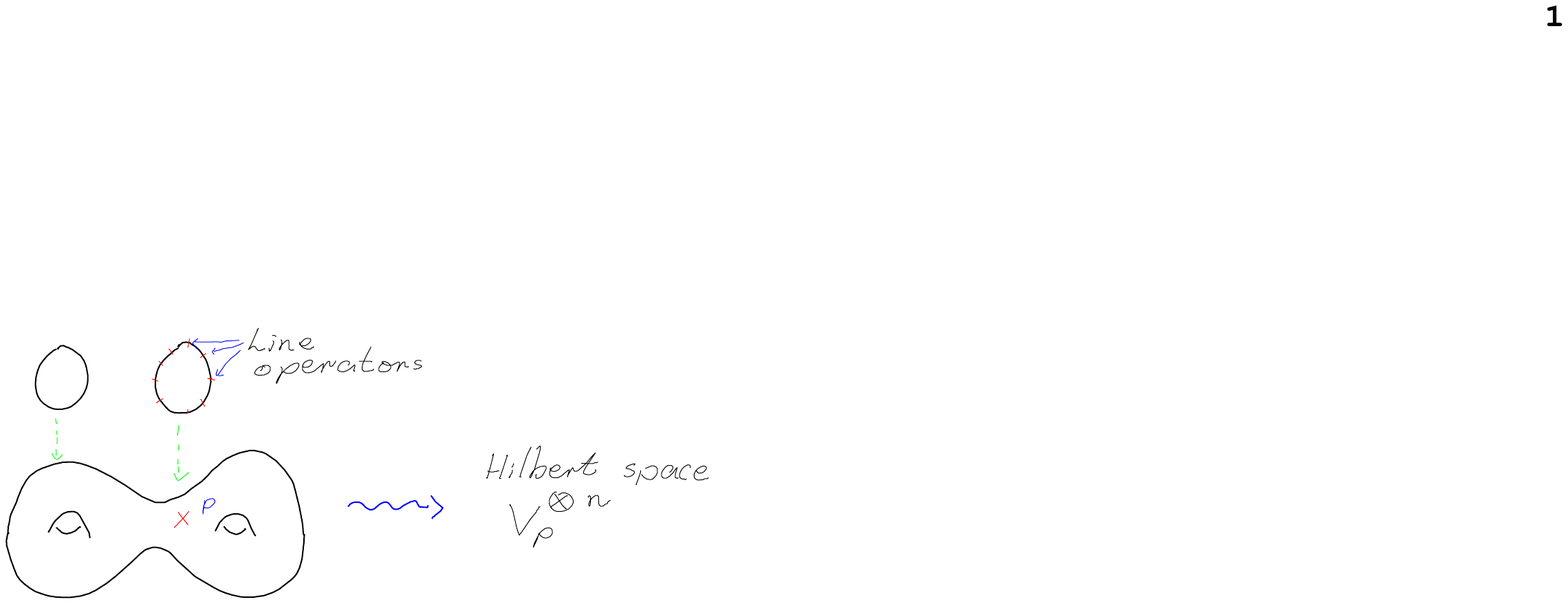}
\caption{Endpoints of $n$ line operators are distributed on the circle above $p$. The Hilbert space in the presence of these line operators is $V_p^{\otimes n}$ (in the case that the category  $\mc{C}(\Sigma_{hol})$ is vector spaces). \label{fig:hilbert space}}
\end{figure}
In the case that $\mc{C}(\Sigma_{hol})$ is the category of vector spaces, so that the functor $\mc{F}_p$ is tensor product with a vector space $V_p$, this formula tells us that the Hilbert space is just $V_p^{\otimes n}$, which is the Hilbert space of the lattice model.

Next let us consider the transfer matrix.  Consider the $4$-manifold with boundary $\Sigma_{hol} \times S^1 \times I$, equipped with the following embedded $1$-manifolds, as illustrated in figure \ref{fig:transfer matrix}:
\begin{enumerate}
\item There are $n$ vertical lines $p \times 2 \pi \tfrac{k}{n} \times I$ for $k = 1,\dots, n$.
\item There is one horizontal circle $q \times S^1 \times \tfrac{1}{2}$. 
\end{enumerate}
Let us label the $n$ lines at $p$ by one line operator, and the single circle at $q$ by another line operator, as in figure \ref{fig:transfer matrix}.
\begin{figure}
\includegraphics[scale=1]{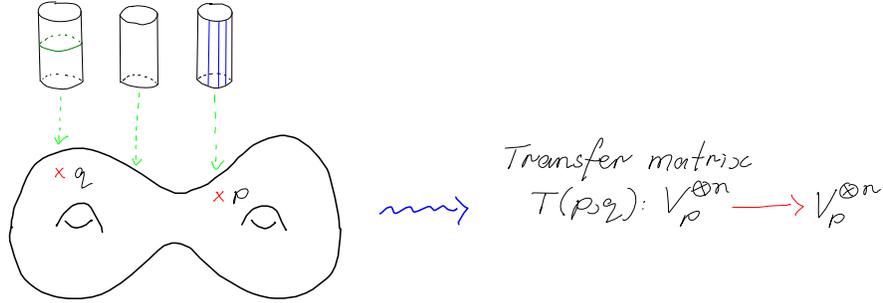}
\caption{There's a horizontal circle (green) at $q$ and $n$ vertical lines (blue) at $p$. This configuration yields the transfer matrix  $T(p,q)$. \label{fig:transfer matrix}}
\end{figure}

  This configuration gives rise to a linear operator on the Hilbert space in the presence of $n$ line operators. In the case that $\mc{C}(\Sigma_{hol})$ is the category of vector spaces, we therefore find a linear map
$$
T(p,q) : V_p^{\otimes n} \to V_p^{\otimes n}.
$$
\begin{lemma}
The linear map $T(p,q)$ is the transfer matrix, obtained as taking the trace in $W_q$ of the $n$-fold $W$-composition of 
$$\check{R}(p,q) : V_p \otimes W_q \to W_q \otimes V_p.$$
\end{lemma}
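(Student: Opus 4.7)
The plan is to decompose the four-dimensional configuration into elementary topological pieces, identify each with $\check{R}(p,q)$, and recognize the $S^1$-closure as a trace over $W_q$. First, using topological invariance in the two real directions $S^1 \times I$, I would isotope the horizontal circle $q \times S^1 \times \{1/2\}$ so that it meets each vertical line $p \times \{2\pi k/n\} \times I$ transversely in exactly one point, with the meeting point of the $k$-th crossing placed at a height strictly between those of the $(k-1)$-st and $(k+1)$-st crossings. Topological invariance of the theory in these two real directions ensures that this isotopy preserves the assigned operator.

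Second, I would cut $S^1$ open at some angle $\theta_0$ lying between two consecutive vertical lines, producing the four-manifold $\Sigma_{hol} \times I \times I$ carrying the same $n$ vertical $V_p$-lines together with a single horizontal $W_q$-strand entering one boundary face and exiting the opposite one. By the TFT cut-and-glue axiom, the operator on $\Sigma_{hol} \times S^1 \times I$ equals the trace, over the state space assigned to the cutting hypersurface, of the operator assigned to the cut manifold; since the hypersurface meets only the $W_q$-strand at one point, this state space is $W_q$ and the relevant trace is $\op{Tr}_{W_q}$. Third, I would slice $\Sigma_{hol} \times I \times I$ horizontally into $n$ sub-slabs, each containing a single transverse crossing of the $W_q$-strand with one $V_p$-line. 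Each such sub-slab realises, up to topological deformation in the $w$-directions, the elementary configuration of Figure \ref{fig:square} and therefore contributes the isomorphism $\check{R}(p,q) : V_p \otimes W_q \to W_q \otimes V_p$. Stacking the sub-slabs and composing along the $W_q$-strand yields the $n$-fold $W$-composition; combining with the previous trace gives exactly the transfer matrix as defined in the earlier section.

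The main obstacle is the unambiguous identification of the elementary crossing with the $\check{R}$-matrix of Figure \ref{fig:square}, together with a careful justification of the cut-and-glue step. One must verify that the local operator at an isolated transverse crossing of a $V_p$-line and a $W_q$-line is genuinely the same isomorphism $\check{R}(p,q)$ used to build the Boltzmann weights, rather than, say, its inverse or its transpose; this amounts to fixing a consistent convention for the orientation of the $W_q$-strand along the $S^1$-direction. Holomorphy in $\Sigma_{hol}$ ensures that the resulting operator depends only on the points $p$ and $q$, topological invariance in the $\Sigma_{top}$ direction ensures that further deformations of the lines contribute nothing, and the identification of $\mc{C}(\Sigma_{hol})$ with the category of vector spaces turns these abstract TFT operations into the explicit tensor-product and trace operations appearing in the statement.
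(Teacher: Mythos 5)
Your proposal is correct and follows essentially the same route as the paper: cut the cylinder $S^1 \times I$ (over the two marked points of $\Sigma_{hol}$) into $n$ squares each containing one crossing, identify each square with $\check{R}(p,q)$ via the configuration of Figure \ref{fig:square}, note that gluing squares is composition along the $W_q$-strand, and that closing the rectangle back into the cylinder is the trace over $W_q$. The extra care you take about transverse positioning and orientation conventions is a reasonable elaboration of the same argument, not a different method.
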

\begin{proof}
This is immediate from the axioms of topological field theory: we simply cut the cylinder into $n$ copies of the square $I \times I$ where each square contains one vertical and one horizontal line.  Each square contributes the $R$-matrix, and gluing the squares together corresponds to composition; gluing the resulting rectangle into a cylinder amounts to taking the trace. Thus, we find the transfer matrix. 
\end{proof}

\subsection{The transfer matrices commute}
Next, we will show the lattice model we have constructed is integrable.
\begin{lemma}
For all $q,q'$, we have
$$
[T(p,q), T(p,q')] = 0.
$$
\end{lemma}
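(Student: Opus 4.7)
The plan is to realize both $T(p,q) T(p,q')$ and $T(p,q') T(p,q)$ as the operator assigned by the field theory to essentially the same four-dimensional configuration, and then conclude by topological invariance in the $S^1 \times I$ directions. First I would take $\Sigma_{hol} \times S^1 \times I$ with the $n$ vertical lines $\{p\} \times \{2\pi k/n\} \times I$, $k = 1, \dots, n$, labelled by the first line operator, together with two horizontal circles $\{q\} \times S^1 \times \{1/3\}$ and $\{q'\} \times S^1 \times \{2/3\}$ labelled by the second. Slicing horizontally between heights $1/3$ and $2/3$ and invoking the previous lemma, the bottom slab produces $T(p,q)$ and the top slab $T(p,q')$ on $V_p^{\otimes n}$, so the whole configuration computes $T(p,q') T(p,q)$.

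Next I would consider the isotopy in $S^1 \times I$ that continuously interchanges the heights of the two horizontal circles---the $q$-circle rising from $1/3$ to $2/3$ while the $q'$-circle descends from $2/3$ to $1/3$. The key observation is that this isotopy passes only through nonsingular line configurations: since $q \neq q'$ the two circles live in different fibers of the projection $\Sigma_{hol} \times S^1 \times I \to \Sigma_{hol}$ and never meet, even at the instant they share a height in $S^1 \times I$; and since $q, q' \neq p$ neither circle ever crosses any of the vertical lines. By the topological invariance of the theory in the $\Sigma_{top} = S^1 \times I$ directions, the operator is preserved under the isotopy, and slicing the final configuration horizontally identifies it with $T(p,q) T(p,q')$. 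The two operators are therefore equal.

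The main obstacle I expect is just bookkeeping for framings: the theory is only topological on framed surfaces, and the line defects themselves carry framings, so one should check that the chosen isotopy carries the natural initial framing to the natural final one. With the product framing on $\Sigma_{hol} \times S^1 \times I$ and the obvious transverse framing on each circle, this compatibility should be automatic, but it is the one place where the weak topological hypothesis on the theory is genuinely invoked. Beyond this check, the argument reduces to a picture: two disjoint loops in a four-manifold can be freely rearranged without crossing, so the operators they produce commute. Notice also that this field-theoretic proof bypasses the Yang--Baxter equation entirely---YBE becomes essential only when one wants to commute operators supported on intersecting one-manifolds, which does not occur here.
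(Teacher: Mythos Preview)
Your argument is correct and is essentially the same as the paper's: both realize the two products as configurations with the $q$- and $q'$-circles at different heights on the cylinder and then slide them past one another, using that they live over distinct points of $\Sigma_{hol}$ so the isotopy never produces a collision. Your version is simply more explicit (specific heights, the framing remark, the observation that YBE is not needed here), but the underlying idea is identical.
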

\begin{figure}
\subfigure[$T(p,q)T(p,q')$]{
\includegraphics[scale=0.4]{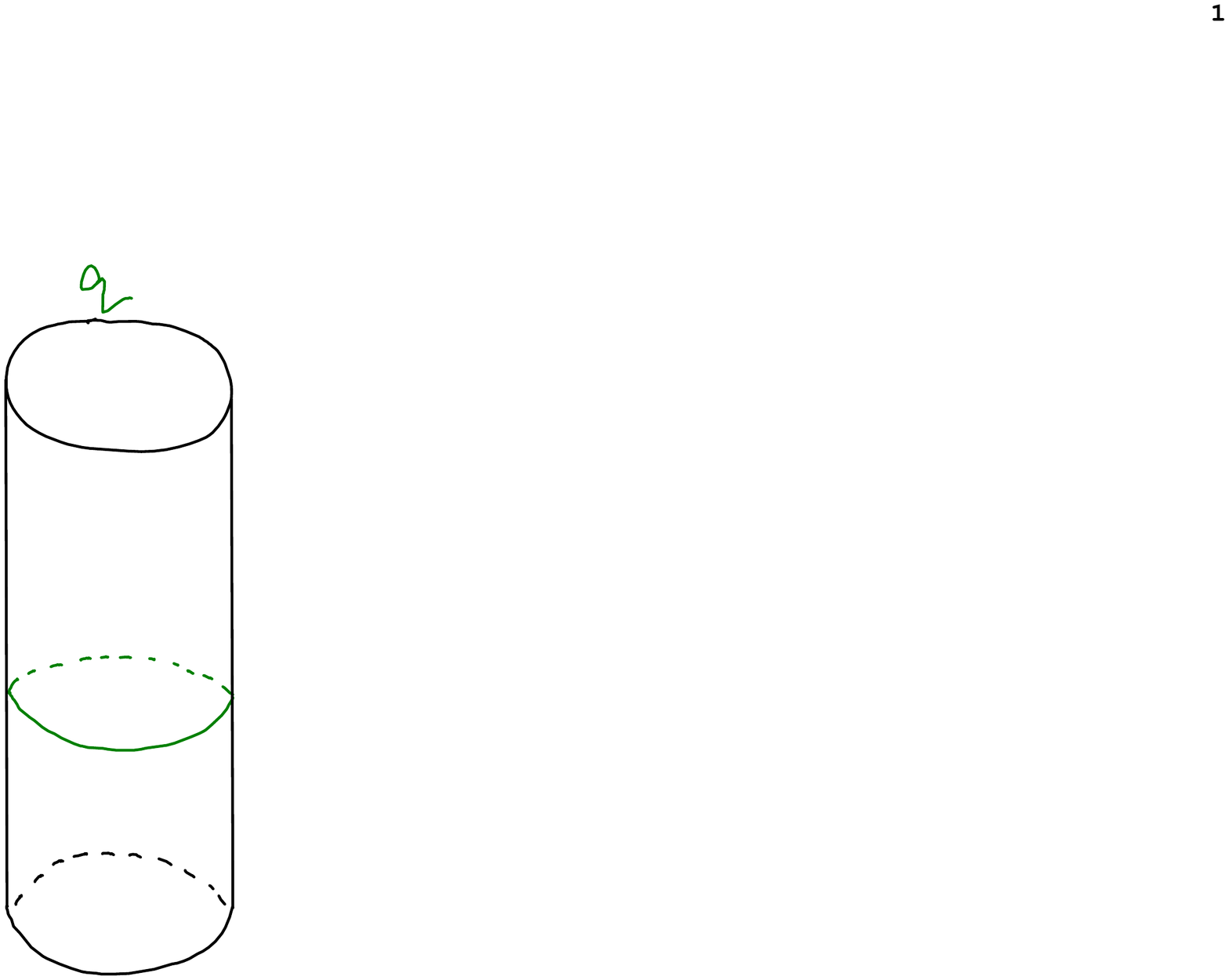} \includegraphics[scale=0.4]{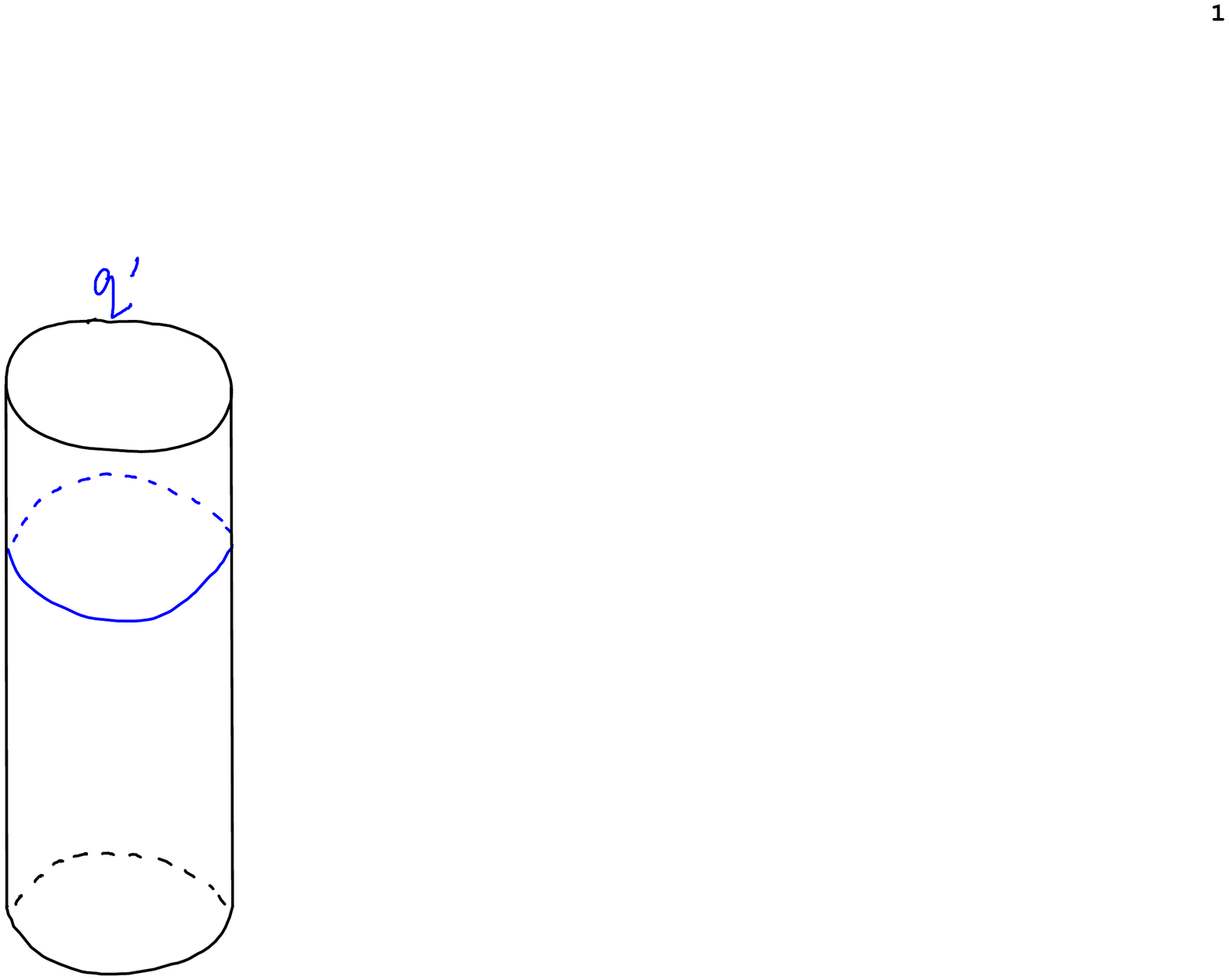}}
\hspace{40pt}
\subfigure[$T(p,q')T(p,q)$] {
\includegraphics[scale=0.4]{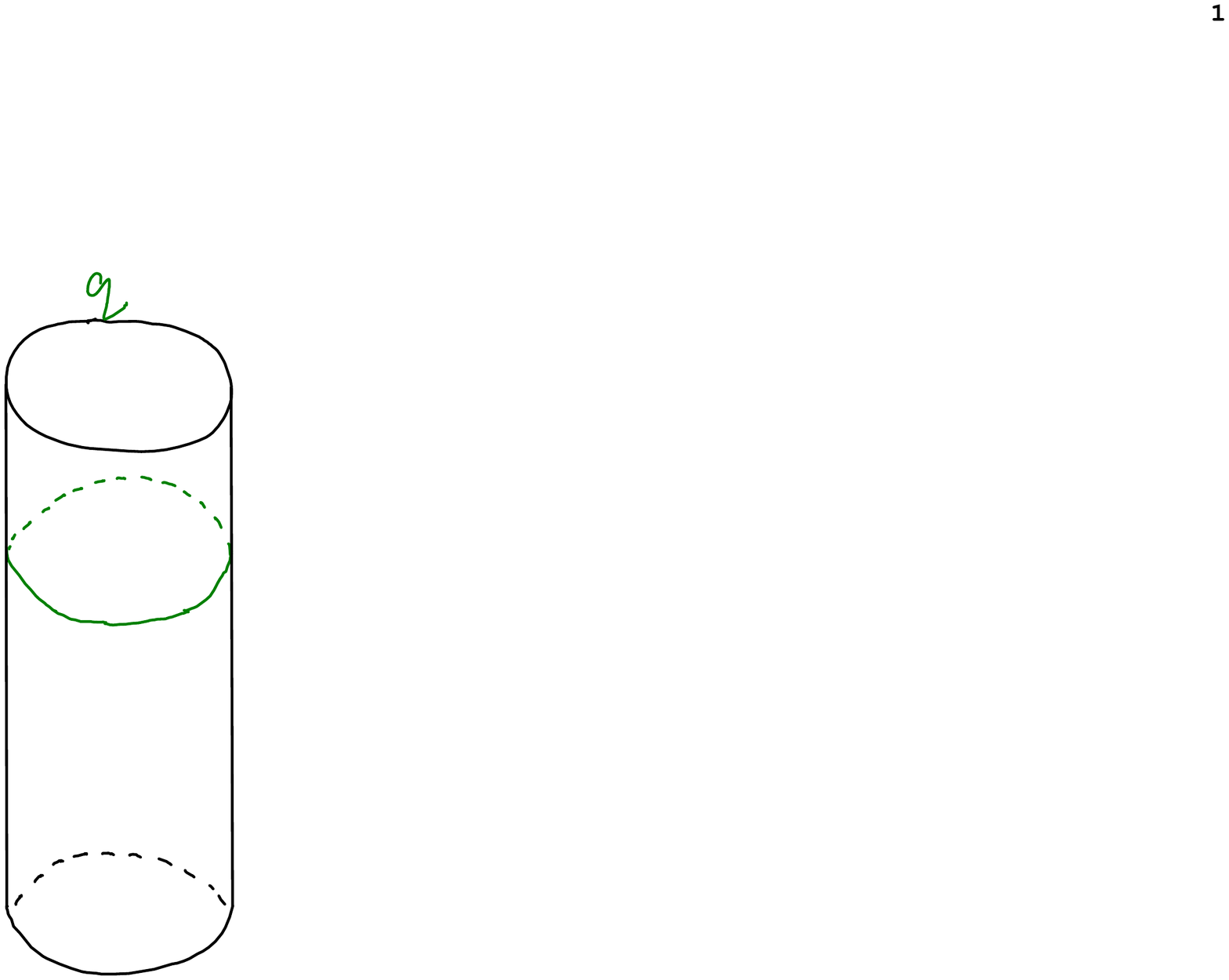} \includegraphics[scale=0.4]{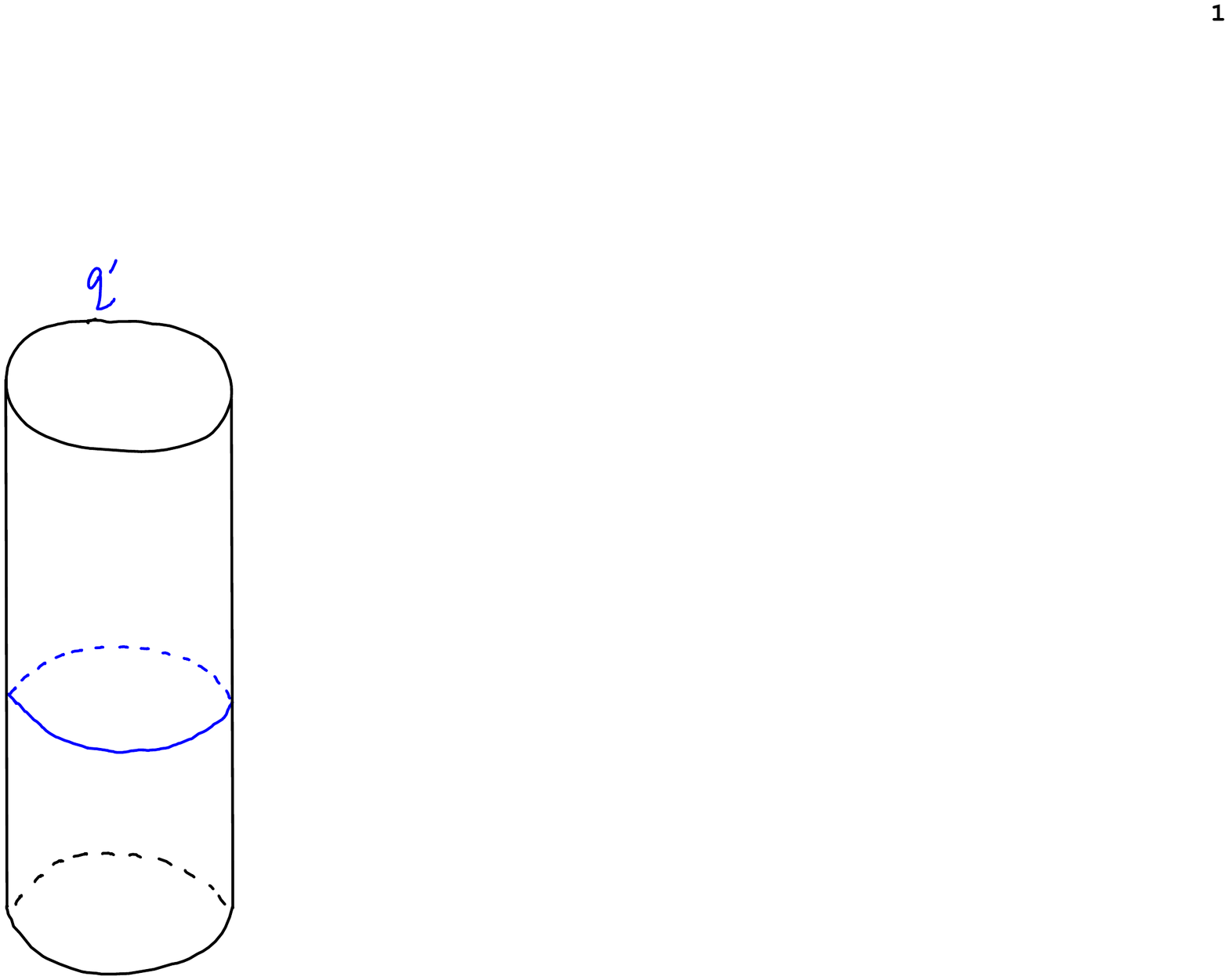}}
\caption{We get from (a) to (b) by sliding the blue circle up and the green circle down; this doesn't change the operator because our theory is topological in the directions containing the cylinder, and the two cylinders are located at different points in the $\Sigma_{hol}$ surface. \label{fig:commuting_transfer_matrices}}
\end{figure}
\begin{proof}
The proof is explained in figure \ref{fig:commuting_transfer_matrices}. The point is that $T(p,q) T(p,q')$ is obtained by placing the circle at $q$ above that at $q'$, whereas $T(p,q') T(p,q)$ is obtained by placing the circle at $q'$ above that at $q$. Because the theory is topological, we can slide the circles past each other. 
\end{proof}

In fact, in a similar way, we can show that the Yang-Baxter equation holds.  In order to state and prove the Yang-Baxter equation, we need to change notation a little bit. For any two line operators $U,U'$  in our theory, let 
$$
\check{R}(U_p, U'_q) : U_p \otimes U'_q \to U'_q \otimes U_p
$$
be the isomorphism arising from the construction described above. We have been using the notation $\check{R}(p,q)$ for this isomorphism in the case that $U = V$ and $U' = W$. 
\begin{figure}
\subfigure[The Yang-Baxter equation]{
\includegraphics[scale=0.7]{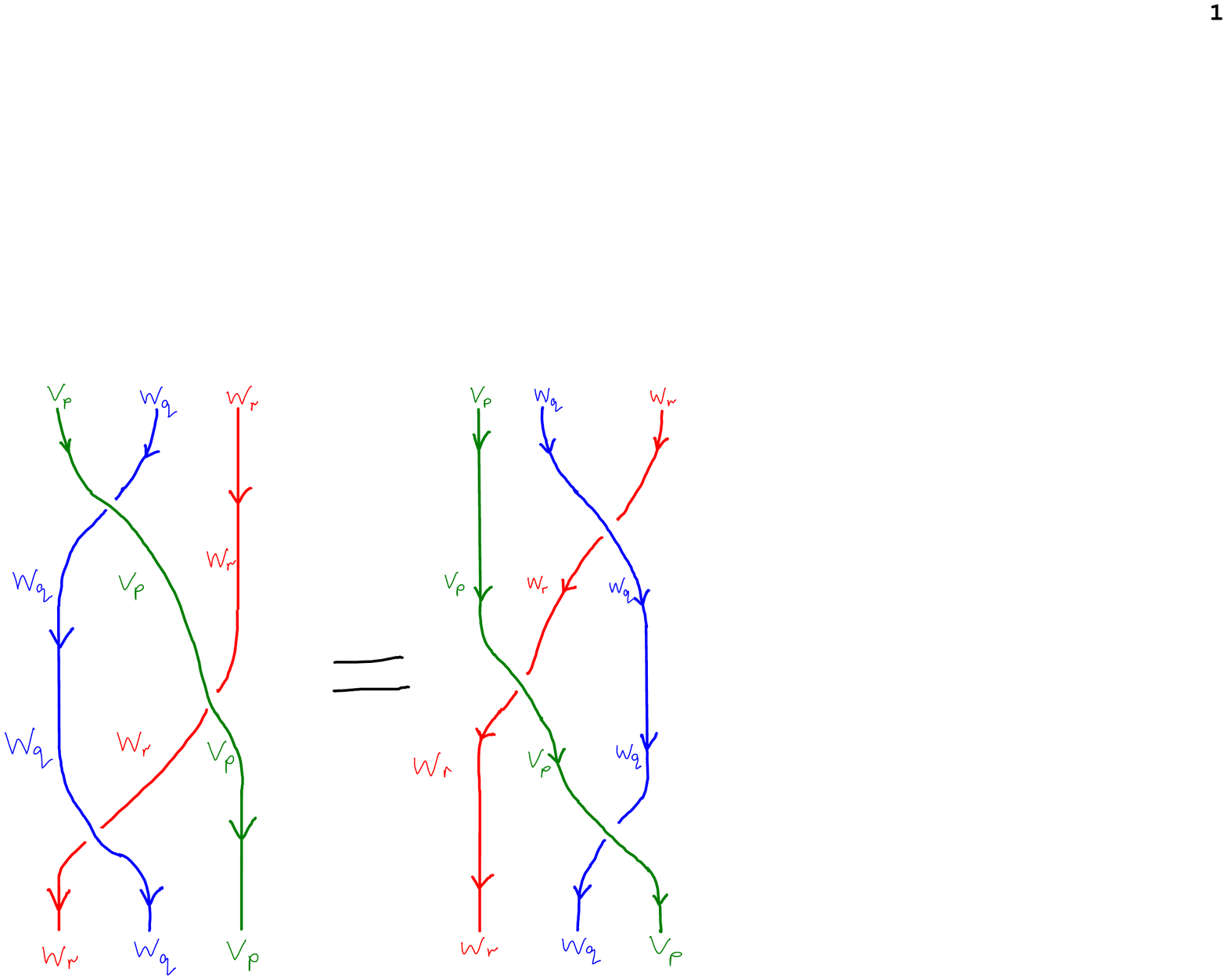}}
\subfigure[The relation $\check{R}(W_p, W_q) \check{R}(W_q,W_p) = 1$.]{\includegraphics[scale=0.5]{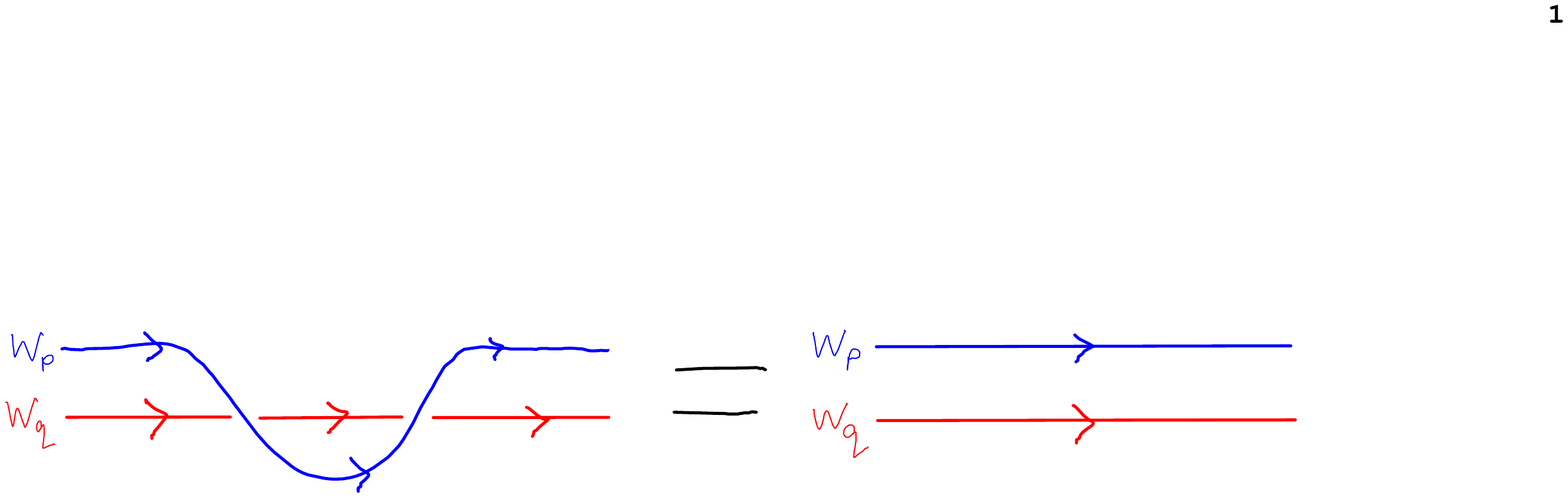}}
\caption[Yang-Baxter equation]{The Yang-Baxter equation, which is the braid relation with labelled strands; and the inverse relation. The diagrams should be interpreted as follows. The plane of the page is the topological plane of the theory. Each strand lives over a fixed point ($p$, $q$ or $r$) in the surface $\Sigma_{hol}$. The over and under crossings are dictated by the convention that $p$ is above $q$ and $q$ is above $r$ (i.e.\ we choose a path from $p$ to $r$ passing through $q$, and the ``height'' in the diagram corresponds to position on this path). \\ \parindent=10pt
As in figures \ref{fig:YBE} and \ref{fig:square}, a crossing corresponds to an $R$-matrix. For example, the $q$-$r$ crossing gives $\check{R} : W_q \otimes W_r \to W_r \otimes W_q$. The two sides are equal simply because the theory is topological in the plane in which the strands lie, so that we can slide the strands over each other without changing the operator. 
\label{fig:braid}}
\end{figure}
\begin{lemma}
For every triple of distinct points $p,q,r \in \Sigma_{hol}$, the Yang-Baxter equation 
$$
\check{R}(W_q, W_r)\check{R}(V_p, W_r)\check{R}({V}_p, {W}_q) = \check{R}({V}_p, {W}_q)\check{R}({V}_p, {W}_r)\check{R}({W}_q, {W}_r).
$$
holds\footnote{I'm very grateful to Josh Shadlen for discussions on the field-theoretic interpretation of the Yang-Baxter equation}. 

Similarly, for every pair of points $p,q$, the equation
$$
\check{R}(W_p, W_q) \check{R}(W_q,W_p) = 1
$$
holds.
\end{lemma}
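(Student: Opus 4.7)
The plan is to read both equations as statements about topologically equivalent configurations of strands in the product $\Sigma_{hol} \times \R^2$, and to use topological invariance of the theory in the $\R^2$ factor to move between them. First I would reformulate $\check{R}(U_p,U'_q)$ as the operator associated to a two-strand tangle in $\R^2$ with one strand labelled by the line operator $U$ and sitting over $p \in \Sigma_{hol}$, and the other labelled by $U'$ and sitting over $q$; the over/under crossing of the strands is determined by the choice of linear order on the finite set of points $\{p,q\}$ in $\Sigma_{hol}$, as prescribed in the caption of figure \ref{fig:braid}. Because the theory is holomorphic in $\Sigma_{hol}$, strands at distinct points never collide in that direction, so everything reduces to the combinatorics of tangles in the topological plane.

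For the Yang-Baxter equation, I would draw three strands labelled by $V,W,W$ and positioned over $p,q,r \in \Sigma_{hol}$ respectively, with heights in $\Sigma_{hol}$ inducing the convention that $p$ is on top and $r$ is on the bottom. The left-hand side $\check{R}(W_q,W_r)\check{R}(V_p,W_r)\check{R}(V_p,W_q)$ is the operator associated to the configuration in which the $V$-strand crosses both $W$-strands before the two $W$-strands cross each other; the right-hand side corresponds to the configuration where the two $W$-strands cross first. These two configurations are ambient isotopic in $\R^2 \times [0,1]$ through tangles with the same strand labels and the same over/under data (since those data are fixed by the positions in $\Sigma_{hol}$, which do not move during the isotopy). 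Topological invariance in the two $\R^2$ directions then implies that both configurations define the same operator, which is exactly the Yang-Baxter equation.

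The inverse relation is simpler: the composition $\check{R}(W_p,W_q)\check{R}(W_q,W_p)$ is the operator associated to a configuration in which two strands over $p$ and $q$ cross and then uncross in the topological plane. This is topologically equivalent to two parallel strands, whose operator is the identity on $W_p \otimes W_q$. Again the consistency of the over/under convention is built into the positions $p,q$ in $\Sigma_{hol}$, and the isotopy takes place purely in the topological $\R^2$.

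The one step that requires some care, and is the real content of the argument, is justifying that the ambient isotopy of tangles in the topological plane actually translates into an equality of operators, i.e.\ that the topological field theory really does assign the same operator to isotopic configurations of line operators sitting at fixed points of $\Sigma_{hol}$. This is a consequence of the assumption that the action of the topological translations is homotopically trivial, combined with the general yoga of topological field theory with defects: any two line-operator configurations in $\R^2$ related by an isotopy induce chain-homotopic operators, and hence equal operators on cohomology (the Hilbert space). Once this is granted, both equations follow by picture-pushing in $\R^2$.
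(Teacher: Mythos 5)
Your proposal is correct and is essentially the paper's own argument: the paper proves both identities by the picture in figure \ref{fig:braid}, interpreting each crossing of strands (lying in the topological plane, over fixed points of $\Sigma_{hol}$) as an $\check{R}$-matrix and invoking topological invariance in that plane to slide strands past each other, which is exactly your isotopy-of-tangles argument (including the observation that the over/under data are fixed by the positions in $\Sigma_{hol}$). Your final paragraph just makes explicit the homotopical-triviality justification that the paper leaves implicit in the phrase ``because the theory is topological.''
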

\begin{proof}
The proof is illustrated in figure \ref{fig:braid}.\end{proof}
Thus, we have proved the following theorem.
\begin{theorem}
Suppose we have a four-dimensional field theory which is a mixture of topological and holomorphic.  Suppose that the theory is equipped with two line operators, and suppose that the category $\mc{C}(\Sigma_{hol})$ assigned to a fixed holomorphic surface is the category of vector spaces.  

Then, associated to this data is a two-dimensional integrable lattice model where the spectral parameter lives in an open subset of $\Sigma_{hol}$.
\end{theorem}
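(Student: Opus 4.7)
The plan is to assemble the ingredients established in the preceding lemmas into the data required by the definition of an integrable lattice model with spectral parameter; essentially no new computation is needed, only a verification that everything produced so far matches the definitions of Section 3.

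First I would fix a point $p \in \Sigma_{hol}$ and recall that the two line operators, placed above points of the holomorphic direction, produce via the previous subsection two endofunctors $\mc{F}_p, \mc{G}_q$ of $\mc{C}(\Sigma_{hol})$. By the hypothesis that $\mc{C}(\Sigma_{hol}) = \op{Vect}$, these are just tensor products with vector spaces $V_p$ and $W_q$. The natural isomorphism $\mc{F}_p \circ \mc{G}_q \cong \mc{G}_q \circ \mc{F}_p$ coming from the configuration in figure \ref{fig:square} is then an isomorphism $\check{R}(p,q) : V_p \otimes W_q \to W_q \otimes V_p$. Because the theory is holomorphic in the $\Sigma_{hol}$-directions, and because the construction is natural in the positions of the endpoints of the line operators in the holomorphic direction, $\check{R}(p,q)$ varies holomorphically in $(p,q)$ off the diagonal; this is precisely the holomorphic-in-the-spectral-parameter property required in Section 3.

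Next I would identify the Hilbert space and the transfer matrix. The Hochschild-homology formula, specialised to the category of vector spaces and the $n$-fold composite of the functor $V_p \otimes (-)$, produces the lattice-model Hilbert space $V_p^{\otimes n}$. The configuration in figure \ref{fig:transfer matrix} — one line operator wrapping the $S^1$ at $q$, with $n$ endpoints of the other along it — defines $T(p,q) : V_p^{\otimes n} \to V_p^{\otimes n}$. The lemma proved just above (cutting the cylinder into $n$ squares and invoking topological invariance) already identifies this $T(p,q)$ with the trace over $W_q$ of the $n$-fold $W$-composition of $\check{R}(p,q)$, so it is a bona fide vertex-model transfer matrix in the sense of Section 2.

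Integrability itself is then handed to us by the lemma that $[T(p,q), T(p,q')] = 0$ for all distinct $q, q' \in \Sigma_{hol} \setminus \{p\}$, which matches the definition of integrability verbatim (and, if desired, the stronger Yang-Baxter lemma provides a more structural proof of the same fact). The only mildly delicate point, and what I regard as the one real obstacle, is the phrase \emph{open subset of $\Sigma_{hol}$} in the conclusion: the $R$-matrix and transfer matrix are defined only when $q \neq p$ and when the operator product expansion of the line operators producing figure \ref{fig:square} does not degenerate, and depending on the specific holomorphic/topological theory there may be further loci (collision divisors, zeros of an auxiliary holomorphic one-form used to write down the theory) that must be excluded. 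Once the spectral parameter is restricted to the complement of this locus, every condition of the integrable-lattice-model definition is met, and the theorem follows.
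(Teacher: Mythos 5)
Your proposal is correct and follows the paper's own route: the theorem is stated there with "Thus, we have proved the following theorem," i.e.\ it is exactly the assembly of the preceding lemmas (the $\check{R}$-matrix from the crossing configuration, the Hilbert space $V_p^{\otimes n}$, the transfer-matrix identification, the commuting of transfer matrices, and the Yang--Baxter relation), which is what you do. Your remark about restricting the spectral parameter to the locus where the line-operator configurations are defined ($q \neq p$, etc.) is a reasonable gloss on the phrase ``open subset of $\Sigma_{hol}$'' and does not depart from the paper's argument.
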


I should remark that this construction can be generalized in many ways.
\begin{enumerate}
\item We can include surface operators in the two topological directions as well as line operators.  Suppose we label some points $x_1,\dots,x_n \in \Sigma_{hol}$ by surface operators. Then, the same story holds, except that the line operators must live above the complement of the points $x_1,\dots,x_n$. 

The category $\mc{C}(\Sigma_{hol})$ is modified in the presence of surface operators.  Let us denote this modified category by $\mc{C}(\Sigma_{hol},\{x_1,\dots,x_n\})$. Even if $\mc{C}(\Sigma_{hol})$ is not the category of vector spaces, judicious choices of surface operators can ensure that $\mc{C}(\Sigma_{hol},\{x_1,\dots,x_n\})$ is the category of vector spaces, so that the construction above applies.
\item Suppose that we have an $n+2$ dimensional theory which is topological in $2$ directions and arbitrary (not necessarily holomorphic) in $n$ directions. Suppose that this theory is equipped with line operators in the topological direction.  Suppose that $M$ is a compact $n$-manifold such that the theory is defined on $M \times \R^2$, and suppose that the category associated to $M$ is the category of vector spaces. (This will happen if the space of classical solutions on $M \times \R^2$ is a point). Then, as above, we find a $2$-dimensional integrable lattice model.  The Boltzmann weights and transfer matrix of this lattice model depend smoothly on a pair of points in $M$. 

If we fix one point $p$ and a line operator above $p$, the Hilbert space will be $V_p^{\otimes n}$ where $V_p$ is the vector space arising from the line operator. The partition function of the theory is a smooth function on $M \setminus\{p\}$. 
\end{enumerate}

\subsection{}
Before proceeding to a discussion of our main example of this construction, I should remark on a possible inconsistency in the discussion.  I have explained how to construct integrable lattice models using the Atiyah-Segal-Freed style axioms for (topological) field theory, where the Hilbert space and its categorical analogs are the fundamental objects.  However, the main example will be constructed using the technology of \cite{Cos11,CosGwi11} which uses a different axiom system, that of factorization algebras.  There is an apparent discrepancy, because the factorization algebra encodes the operators (or observables) of a theory, but not always the Hilbert space.

This discrepancy is resolved by using a version of the state-operator correspondence. If we have a codimension $1$ manifold which is the boundary $\partial U$ of a region $U$, we declare that the Hilbert space for $\partial U$ is the space of observables on $U$.  

Similarly, we posit that a categorified version of the state-operator correspondence holds.  For example, in a four-dimensional topological/holomorphic theory, the Atiyah-Segal-Freed axioms tell us that a monoidal category should be assigned to the product of $\Sigma_{hol}$ with $\partial ([0,1])$, the boundary of the interval. The theory of factorization algebras tells us that observables on $\Sigma_{hol} \times [0,1]$ (or equivalently on $\Sigma_{hol} \times D$ where $D$ is a two-disc) forms an $E_2$ algebra, where the $E_2$ structure arises from the operator product in the two topological directions. A theorem of Lurie \cite{Lur12} tells us that the category of left modules for an $E_2$ algebra is a monoidal category. The categorified state-operator correspondence we propose tells us that this monoidal category is what we assign to $\Sigma_{hol} \times \partial([0,1])$.   

The reader can check that in our construction of integrable systems, we didn't really need the category $\mc{C}(\Sigma_{hol})$ associated to $\Sigma_{hol}$; we only used the monoidal category of endofunctors of this category associated to $\Sigma_{hol} \times \partial([0,1])$.  In this manner, we see that the arguments presented above which produce an integrable system using the Atiyah-Segal-Freed axioms for field theory also work using the language of factorization algebras.  The interested reader can consult \cite{Cos13} for more on this point, as \cite{Cos13} is written entirely using the language of factorization algebras.

\section{Spin-chains from $N=1$ pure gauge theory}
So far, we have explained the general construction which associates an integrable lattice model to a four dimensional field theory which is a mixture of holomorphic and topological.  In this section, I will sketch the main result of \cite{Cos13}: I will describe such a four-dimensional theory where the corresponding integrable lattice model is a spin-chain model.  A wide class of spin-chain models are constructed this way, by varying the gauge  group and the Wilson operator used: we find every spin-chain model that arises from the representation theory of the Yangian.  In particular, we find the Heisenberg $XXX$ model, which arises from the Yangian for $\mf{sl}_2$; but not the $XXZ$ model, which is related to the quantum affine algebra for $\mf{sl}_2$ at non-zero level. 

The theory we are considering is a deformation of pure $N=1$ gauge theory.  Let us start by describing the fields and action functional of $N=1$ gauge theory on $\R^4$.  Recall that $\op{Spin}(4) = \op{SU}(2) \times \op{SU}(2)$.  Let $\mc{S}_+$ and $\mc{S}_-$ be the defining two-complex-dimensional representations of the two copies of $\op{SU}(2)$. These are the irreducible spin representations of $\op{Spin}(4)$.  We use the notation
$$
\mscr{S}_{\pm} = \mc{S}_{\pm} \otimes \cinfty(\R^4)
$$
for sections of the corresponding spin bundles.

We will describe the $N=1$ gauge theory in the first-order formulation (which is equivalent to the more familiar second-order formulation). Let $\g$ be a semi-simple Lie algebra.  The fields of the $N=1$ gauge  theory consist of a connection $A \in \Omega^1(\R^4) \otimes \g$, and adjoint-valued self-dual two-form $B \in \Omega^2_+(\R^4)$, and adjoint-valued spinors $\psi_{\pm} \in \mc{S}_{\pm} \otimes \g$. The action functional is 
$$
S(A,\psi) = \int F(A) \wedge B + c \int B \wedge B + \int \psi_{+} \Dirac_A \psi_-.
$$
The gauge coupling constant is $c$. 
 
We will consider all of our fields to be complex, so that $A$, $B$ are a complex $1$-form and $2$-form. The action functional and all observables are holomorphic functions of the fields, and the path integral will be taken over a contour.  Because we work in perturbation theory the choice of contour is irrelevant.  We need to proceed in this way because the spin representations $\mc{S}_{\pm}$ are complex.  This is an artifact of working in Euclidean signature.  In Lorentzian signature,  the spin representation $\mc{S}_+ \oplus \mc{S}_-$ is real so that we can take our fields to be real. 

Let us consider a deformation of this action functional. Let us choose a complex structure on $\R^4$, and write $\R^4 = \C^2$ with coordinates $z$ and $w$.  Let 
$$
S'(A,\psi) = - \int \frac{1}{2}z F(A) \wedge F(A) + 2c \int \ip{\psi_+ ,\d z\psi_- }
$$
where in the second term, we are using the Clifford multiplication of $\d z$ on $\psi_+$ and then pairing with $\psi_-$. By integration by parts, we can rewrite the first term as $\int \d z CS(A)$ where $CS(A)$ is the Chern-Simons three-form, normalized so that $\d CS(A) = \tfrac{1}{2} \ip{F(A),F(A)}$.

The deformed action is 
$$
S_{deformed} = S + \tfrac{1}{2 \pi i} S'.
$$

The $N=1$ supersymmetric gauge theory is acted on by the $N=1$ supertranslation Lie algebra, which is a complex Lie algebra whose underlying $\Z/2$-graded vector space is
$$
\mc{T}^{N=1} = \pi \left( \mc{S}_+ \oplus \mc{S}_-\right) \oplus \C^4.
$$
The symbol $\pi$ indicates parity reversal, so that the spinors $\mc{S}_{\pm}$ are odd. 

There is a unique up to scale $\op{Spin}(4)$-equivariant isomorphism
$$
\Gamma : \mc{S}_+ \otimes \mc{S}_- \to \C^4.
$$
The Lie bracket on $\mc{T}^{N=1}$ is defined by saying that if $Q_{\pm} \in \mc{S}_{\pm}$, 
$$
[Q_+, Q_-] = \Gamma(Q_+ \otimes Q_-)
$$
and all other brackets are zero. 

Elements of $\mc{S}_+$ induce complex structures on $\R^4$. One way to see this is to observe that the stabiliser of an element $Q \in \mc{S}_+$ is $\op{SU}(2)$ inside  $\op{Spin}(4)$.  A more concrete way to show this is as follows. If $Q \in \mc{S}_+$, the image of bracketing with $Q$ is a rank-two subspace of $\C^4 = \R^4 \otimes \C$, which we declare to be the $-i$ eigenspace of an operator $J : \R^4 \to \R^4$ defining the complex structure.    

In this way one can identify the projective space $\mbb{P}(\mc{S}_+)$ with the space of complex structures on $\R^4$ for which the standard Euclidean metric is K\"ahler and for which the induced orientation is the standard orientation on $\R^4$.  Similarly, $\mbb{P}(\mc{S}_-)$ is the space of complex structures compatible with the metric and inducing the opposite orientation. 

In \cite{Cos13} I show the following, by explicit computation. 
\begin{lemma}
Let $Q$ be the unique up to scale supercharge in $\mc{S}_+$ such that $z$ is holomorphic in the corresponding complex structure. (This happens if $Q$ is in the kernel of the Clifford multiplication map $\d z \cdot : \mc{S}_+ \to \mc{S}_-$). 

Then, the deformed $N=1$ supersymmetric gauge theory introduced above is invariant under $Q$.
\end{lemma}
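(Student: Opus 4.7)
The strategy is to split $S_{\text{deformed}} = S + \tfrac{1}{2\pi i} S'$ and exploit the fact that the undeformed action $S$ is $N=1$ supersymmetric, so it is already invariant under every element of $\mc{S}_+ \oplus \mc{S}_-$ and in particular under $Q$. The whole problem therefore collapses to showing $Q \cdot S' = 0$, where $S'$ consists of the two pieces $-\tfrac{1}{2}\int z\, F(A)\wedge F(A)$ and $2c\int \ip{\psi_+, \d z\, \psi_-}$. The hypothesis on $Q$ enters through the single identity $\d z \cdot Q = 0$ in the Clifford module $\mc{S}_+ \to \mc{S}_-$, and the plan is to reduce each variation to a form in which this identity can be applied, so that the surviving contributions from the two pieces cancel against each other.

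For the bosonic piece, I would first integrate by parts to write $-\tfrac{1}{2}\int z\, F(A)\wedge F(A) = \int \d z \wedge CS(A)$ modulo a boundary term, using $\d CS(A) = \tfrac{1}{2}\ip{F(A),F(A)}$ as noted in the setup. The $Q$-variation of the Chern--Simons form is then the standard $\ip{\delta_Q A, F(A)}$. Inserting the first-order $N=1$ supersymmetry law $\delta_Q A_\mu \sim \ip{Q, \gamma_\mu \psi_-}$, together with the corresponding $\psi_+$ contribution that appears in the first-order formulation, the wedge with $\d z$ rearranges the Clifford contractions so that factors of $\d z \cdot Q$ are produced and vanish by hypothesis; what remains is a fermion bilinear paired against $F(A)$ and $\d z$.

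For the fermionic piece, varying $\int \ip{\psi_+, \d z\, \psi_-}$ under $Q$ produces $\ip{\delta_Q \psi_+, \d z\, \psi_-} + \ip{\psi_+, \d z\, \delta_Q \psi_-}$. The first-order SUSY transformations schematically give $\delta_Q \psi_+ \sim B \cdot Q$ and $\delta_Q \psi_- \sim F(A) \cdot Q$; anticommuting the Clifford action moves $Q$ next to $\d z$, so that factors of $\d z \cdot Q$ kill the $\delta_Q \psi_+$ contribution outright and reduce the $\delta_Q \psi_-$ contribution to precisely the expression required to cancel the surviving bosonic variation. The relative coefficient of the two terms inside $S'$ is fixed exactly so that this cancellation occurs, which is the reason the coupling $c$ reappears in $S'$ with the specific factor written.

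The main obstacle is bookkeeping: tracking the Euclidean chirality conventions, fixing the normalisation of $\Gamma : \mc{S}_+ \otimes \mc{S}_- \to \C^4$, and verifying the exact form of the first-order $N=1$ transformations of $A$, $B$, $\psi_\pm$ in the conventions used here. None of the individual steps is conceptually subtle once $\d z \cdot Q = 0$ is in hand, but the entire argument hinges on the factors of $2$, $i$, and $c$ aligning between the bosonic and fermionic variations; this is presumably why the lemma is established by explicit computation rather than by a more abstract argument.
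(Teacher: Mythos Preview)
Your decomposition has a genuine gap. You assume that the supercharge $Q$ which leaves $S_{\text{deformed}}$ invariant is the \emph{undeformed} $N=1$ supercharge, so that $Q\cdot S=0$ automatically and only $Q\cdot S'=0$ remains to be checked. But the paper is explicit that this is not so: the remark immediately following the lemma states that ``the action of supersymmetry is deformed when we deform the action functional,'' and gives the extra term in the transformation law, namely $\delta_{Q}B$ acquires a piece $\d z\wedge\psi_-$ (under the identification $\mscr{S}_-\cong\Omega^{1,0}$). Writing $Q=Q_0+Q'$ with $Q_0$ the undeformed supercharge, the relevant cancellation is
\[
Q_0\cdot S' \;+\; (2\pi i)\,Q'\cdot S \;+\; Q'\cdot S' \;=\;0,
\]
and the middle term is not zero: $Q'\cdot S$ contains $\int F(A)\wedge\d z\wedge\psi_-$ from varying $\int F(A)\wedge B$, and $2c\int B\wedge\d z\wedge\psi_-$ from varying $c\int B\wedge B$. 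These must participate in the cancellation against the pieces you identify in $Q_0\cdot S'$. Your scheme, which tries to make the bosonic and fermionic parts of $Q_0\cdot S'$ cancel \emph{each other}, cannot close without them.

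Concretely, your treatment of the fermionic piece already shows the problem. You claim the $\ip{\delta_Q\psi_+,\d z\,\psi_-}$ contribution is killed outright by $\d z\cdot Q=0$ after ``anticommuting the Clifford action.'' But $\delta_{Q_0}\psi_+\sim B\cdot Q$ with $B\in\Omega^2_+\cong\operatorname{Sym}^2\mc{S}_+$, and there is no Clifford anticommutation relation that moves a self-dual two-form past a one-form to expose a factor of $\d z\cdot Q$; the components of $B$ in $\Omega^{1,1}$ and $\Omega^{0,2}$ survive. It is precisely the new term $Q'\cdot(c\int B\wedge B)$ that absorbs these. The paper does not reproduce the computation here (it defers to \cite{Cos13}), but the remark about the deformed $Q$ is the signal that your reduction to $Q_0\cdot S'=0$ is too optimistic.
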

\begin{remark}
A closely related result is proved in \cite{GaiWit08}, where they consider a similar deformation of the $N=4$ gauge theory. In fact Gaiotto and Witten show that the deformation they consider is $\tfrac{1}{2}$-BPS.  Probably the deformation considered here is also $\tfrac{1}{2}$-BPS, i.e. it is probably also invariant under the supercharge in $\mc{S}_-$ for which $z$ is holomorphic.  
\end{remark}
\begin{remark}
The action of supersymmetry is deformed when we deform the action functional.  The chosen supercharge $Q \in \mc{S}_+$ acts as follows.  Once we have chosen $Q$, we can identify $\mscr{S}_-$ with $\Omega^{1,0}(\C^2)$ for the chosen complex structure. Then, the deformed action of $Q$ on the fields of the theory has a term mapping
\begin{align*}
\mscr{S}_- = \Omega^{1,0}(\C^2) &\to \Omega^2_+(\C^2) \\
\psi_- & \mapsto \d z \wedge \psi_-. 
\end{align*}
\end{remark}
In what follows, we will fix this supercharge $Q \in \mc{S}_+$ and use the induced complex structure to identify $\R^4$ with $\C^2$. We will let $z,w$ be holomorphic linear coordinates on $\C^2$. 
\subsection{}
Since $S_{deformed}$ is invariant under the supercharge $Q$, we can consider the twisted theory. 
\begin{lemma}
The fields of the twisted theory are $A \in \Omega^1(\C^2) / \d z$ with action functional $\int \d z CS(A)$.
\end{lemma}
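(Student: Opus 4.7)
The plan is to compute the $Q$-cohomology of the BV complex of the deformed $N=1$ theory and identify the result with the claimed four-dimensional Chern-Simons theory. I work off-shell in the BV formalism, so the fields are $A$, $B$, $\psi_\pm$, a gauge ghost $\gamma\in\Omega^0(\C^2)\otimes\g$, and the corresponding BV antifields.

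First I use $Q$ to decompose the spinor bundles and the self-dual two-form bundle into holomorphic pieces of the complex structure determined by $Q$. Alongside the identification $\mscr{S}_-\cong\Omega^{1,0}(\C^2)\otimes\g$ from the preceding remark, one has $\mscr{S}_+\cong(\Omega^{0,0}\oplus\Omega^{0,2})(\C^2)\otimes\g$ and $\Omega^2_+\cong\Omega^{2,0}\oplus\C\cdot\omega\oplus\Omega^{0,2}$, where $\omega$ is the K\"ahler form. In these identifications the undeformed $Q$-action has the standard shape of a holomorphic twist of the vector multiplet: it pairs the $(0,1)$-piece of $A$ with the scalar piece of $\psi_+$ via $\bar\partial$, pairs the $\Omega^{0,2}$-piece of $\psi_+$ with the $\Omega^{0,2}$-piece of $B$, and sends $\psi_-$ into the $\Omega^{2,0}$-piece of $B$ via $\bar\partial$. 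The deformation contributes the additional map $\d z\wedge(-):\mscr{S}_-=\Omega^{1,0}\to\Omega^{2,0}\subset\Omega^2_+$ singled out in the remark preceding the lemma.

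Next I pass to $Q$-cohomology. With the deformation turned on, the subcomplex generated by $B$, $\psi_-$, the $\Omega^{0,2}$-piece of $\psi_+$, and their BV antifields is acyclic: the operator $\bar\partial+\d z\wedge$ out of $\psi_-\in\Omega^{1,0}$ maps isomorphically onto the $\Omega^{2,0}\oplus\C\cdot\omega$ summand of $B$, while the $\Omega^{0,2}$-pieces of $\psi_+$ and $B$ cancel each other. The only surviving classical field is $A\in\Omega^1(\C^2)\otimes\g$; moreover the $A_z\,\d z$ component of $A$ is itself $Q$-exact (this also uses the deformation, which makes $\partial_z$ trivial as a translation in the twisted theory), so $A$ represents a class in $(\Omega^1(\C^2)/\d z\cdot\Omega^0(\C^2))\otimes\g$. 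The ghost $\gamma$ and its antifield descend directly to the standard Chern-Simons-type BV ghosts for the residual gauge redundancy.

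Finally I restrict $S_{deformed}$ to the cohomology. The undeformed Lagrangian $\int F(A)\wedge B+c\int B\wedge B+\int\psi_+\Dirac_A\psi_-$ is supported entirely on $Q$-trivial fields and contributes only boundary terms. From $S'$ the fermion bilinear $\int\ip{\psi_+,\d z\,\psi_-}$ is likewise $Q$-trivial, while integration by parts, using $\d CS(A)=\tfrac{1}{2}\ip{F(A),F(A)}$, rewrites $-\tfrac{1}{2}\int z F(A)\wedge F(A)$ as $\int\d z\wedge CS(A)$. This expression manifestly descends to the quotient by $\d z\cdot\Omega^0$ because wedging with $\d z$ annihilates that subspace. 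The main obstacle in this plan is the acyclicity of the $(B,\psi_-)$ subcomplex after deformation: this is the step that uses the specific form of $S'$ essentially, and one must verify carefully that $\bar\partial+\d z\wedge$ really does trivialise all of $B$ together with its antifield, leaving behind only $A$ modulo $\d z$.
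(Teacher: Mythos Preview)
Your cancellation bookkeeping does not match the actual $Q$-action, and two of your key steps are incorrect.

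First, the map you write as $\bar\partial+\d z\wedge:\psi_-\cong\Omega^{1,0}\to\Omega^{2,0}\oplus\C\omega$ is not an isomorphism: the $\bar\partial$ piece is a differential operator, not an algebraic identification, so it does not produce a contractible pair. Only the algebraic part $\d z\wedge(-)$ gives a genuine cancellation, and that map kills the $\d z$-component of $\psi_-$ and identifies the $\d w$-component with $B^{2,0}$. Thus $\psi_-$ kills exactly \emph{one} component of $B$, not two. Correspondingly, both components of $\psi_+$ (not just the $\Omega^{0,2}$ piece) must be used to cancel the remaining two components of $B$; your scalar $\psi_+$ is left dangling, since ``pairing it with $A^{0,1}$ via $\bar\partial$'' is again a differential, not a contracting homotopy, and in any case the $(0,1)$-part of $A$ must survive in $\Omega^1/\d z$.

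Second, your argument that $A_z\,\d z$ is $Q$-exact because ``the deformation makes $\partial_z$ trivial as a translation'' is wrong on both counts. The twisted theory is \emph{holomorphic} in $z$ and topological in $w$: it is $\partial_{\bar z},\partial_w,\partial_{\bar w}$ that become homotopically trivial, not $\partial_z$. And homotopy-triviality of a translation vector field is in any case a statement about operators on observables, not a statement that a particular field component is $Q$-exact. The actual mechanism is concrete: the $\d z$-component of $\psi_-$ (the one annihilated by $\d z\wedge(-)$, hence not used to kill $B$) pairs algebraically with $A_z$ under $Q$, and that is what removes the $\d z$-component of the connection.

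So the correct pattern, as in the paper, is: the two components of $\psi_+$ cancel two components of $B$; the remaining component of $B$ cancels with one component of $\psi_-$ via the deformation map $\d z\wedge(-)$; and the leftover component of $\psi_-$ cancels the $\d z$-component of $A$, leaving $A\in\Omega^1(\C^2)/\d z$. Your treatment of the action functional is essentially fine once the field content is sorted out.
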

\begin{remark}
The observables of the twisted theory are, by definition, the $Q$-cohomology of the original observables.  The fields of the twisted theory are the $Q$-cohomology of the original fields. We set things up so that $Q$ acts linearly on the space of fields. When we take $Q$-cohomology, two components of $B \in \Omega^2_+$ cancel with the spinors in $\mscr{S}_+$. The remaining component of $B$ cancels with one of the components of the spinor in $\mscr{S}_-$. The remaining component of $\mscr{S}_-$ cancels with the $\d z$ component of the connection $A \in \Omega^1$, leaving a $3$-component partial connection.   
\end{remark}
\begin{remark}
As before, we treat the space of fields as a complex manifold, and integrate over a contour. 
\end{remark}
\begin{lemma}
The twisted, deformed $N=1$ theory is holomorphic in the $z$-plane and topological in the $w$-plane.  
\end{lemma}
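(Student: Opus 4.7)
The plan is to verify directly, using the description of the twisted theory given in the preceding lemma, that the translations $\partial_{\bar z}$, $\partial_w$, $\partial_{\bar w}$ all act homotopically trivially on the classical BV complex, while $\partial_z$ does not. Recall the twisted fields are partial connections $A \in \Omega^1(\C^2)/\d z\cdot\Omega^0(\C^2)$ with action $\int \d z \wedge CS(A)$, where $CS(A) = \op{Tr}(A\,\d A + \tfrac{2}{3}A^3)$.

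The first key observation is that wedging with $\d z$ converts this four-dimensional functional into a fibered version of three-dimensional Chern--Simons theory over the holomorphic $z$-line. Since $A$ has no $\d z$ component, the $\d z$-containing piece of $\d A$ is annihilated by the outer $\d z\wedge$, so only the transverse de Rham differential $\d' = \d\bar z\,\partial_{\bar z} + \d w\,\partial_w + \d\bar w\,\partial_{\bar w}$ contributes. The action therefore equals $\int \d z \wedge \op{Tr}(A\,\d' A + \tfrac{2}{3}A^3)$, manifestly a three-dimensional Chern--Simons functional built from the transverse components of $A$ and integrated against the holomorphic measure $\d z$ along $\C_z$.

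For each transverse translation $X \in \{\partial_{\bar z}, \partial_w, \partial_{\bar w}\}$, the contraction $\iota_X$ annihilates $\d z$ and hence descends to the quotient $\Omega^*(\C^2)/\d z \wedge \Omega^*$. The Cartan homotopy formula gives $[\d',\iota_X] = L_X$ at the free level, and upgrading to the full interacting BV differential $\d' + [A,-]$ produces an additional commutator term $[\iota_X A, -]$, which is an inner gauge transformation generated by the ghost-valued parameter $\iota_X A$ and hence BV-exact. Consequently $L_X$ is itself BV-exact, which is precisely the statement that the three required translations are homotopically trivial on the twisted field complex.

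By contrast, $\iota_{\partial_z}$ does not descend to the reduced field complex because it sends $\d z$ to $1$; more substantively, the holomorphic measure $\d z$ in the action produces genuine non-trivial $z$-dependence in correlators (one expects holomorphic, not constant, dependence on the spectral parameter), so $L_{\partial_z}$ admits no BV-exact primitive. This confirms the theory is holomorphic in the $z$-plane and topological in the $w$-plane. The main obstacle in executing this plan is verifying rigorously that the interaction correction $[\iota_X A, -]$ really sits in the image of the full BV differential: one must interpret $\iota_X A$ as a BRST ghost parameter and track the antifield contributions carefully, a standard but somewhat delicate bookkeeping exercise in the BV complex of a Chern--Simons-type theory.
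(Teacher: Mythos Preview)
Your argument is correct and is essentially a careful unpacking of what the paper leaves implicit. The paper does not give a proof of this lemma at all: immediately after the statement it simply remarks that ``this is easy to see from the presentation of the twisted theory given above with fields $A \in \Omega^1/\d z \otimes \g$ and action functional $\int \d z\, CS(A)$.'' Your proposal supplies exactly the standard Chern--Simons-type justification the author presumably has in mind: the contraction operators $\iota_X$ for $X \in \{\partial_{\bar z},\partial_w,\partial_{\bar w}\}$ descend to the quotient $\Omega^\ast/\d z$ (because $\iota_X \d z = 0$), are derivations of the Lie bracket, and satisfy $[\d',\iota_X] = L_X$, so they furnish the required homotopies trivialising those three translations on the classical BV complex. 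Your caveat about the interaction term is well-placed but not a genuine obstacle: the cleanest phrasing is that $\iota_X$ is a degree~$-1$ derivation of the dg Lie algebra $(\Omega^\ast/\d z \otimes \g,\, \d',\, [-,-])$ with $[\d',\iota_X]=L_X$, which is precisely the statement that $L_X$ is a homotopically trivial symmetry of the $L_\infty$ structure underlying the BV theory.

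One small remark: the lemma as stated (and as the paper defines ``mixed holomorphic/topological'') only requires the three translations $\partial_{\bar z},\partial_w,\partial_{\bar w}$ to be homotopically trivial; there is no need to argue that $\partial_z$ is \emph{not} trivialisable. Your discussion of $\partial_z$ is therefore supplementary rather than part of the proof, and the heuristic you give for its non-triviality (the failure of $\iota_{\partial_z}$ to descend, and genuine $z$-dependence of correlators) is reasonable motivation but would need more work to be made rigorous.
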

This is easy to see from the presentation of the twisted theory give above with fields in $A \in \Omega^1 / \d z \otimes \g$ and action functional $\int \d z CS(A)$.

One manifestation of this is the following.
\begin{lemma}
A solution to the equations of motion to the twisted, deformed $N=1$ theory is a holomorphic bundle on $\C^2$ equipped with a holomorphic (and therefore flat) connection in the $w$-plane.
\end{lemma}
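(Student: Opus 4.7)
The plan is to compute the Euler--Lagrange equations of the twisted action $S(A) = \int_{\C^2} \d z \wedge CS(A)$ directly and then to interpret the resulting scalar equations. Choosing a representative of $A \in \Omega^1(\C^2)/\d z \otimes \g$ in the form $A = A_{\bar z}\,\d\bar z + A_w\,\d w + A_{\bar w}\,\d\bar w$, I would use the standard identity $\delta CS(A) = \langle \delta A, F(A)\rangle + \text{exact}$ to obtain, modulo boundary terms,
$$
\delta S = \int_{\C^2} \d z \wedge \langle \delta A,\, F(A)\rangle.
$$
Letting $\delta A$ run over compactly supported variations in each of the three permitted directions $\d\bar z$, $\d w$, $\d\bar w$ and matching degrees so that the integrand becomes the top form $\d z \wedge \d\bar z \wedge \d w \wedge \d\bar w$, the equations of motion reduce to the three scalar conditions
$$
F_{w\bar w} = 0, \qquad F_{\bar z w} = 0, \qquad F_{\bar z\bar w} = 0.
$$

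The next step is to unpack these geometrically. The equation $F_{\bar z\bar w} = 0$ is the $(0,2)$-component of the partial curvature, so by the Koszul--Malgrange / Newlander--Nirenberg integrability theorem the operator $\bar\partial_E := \bar\partial + A_{\bar z}\,\d\bar z + A_{\bar w}\,\d\bar w$ is integrable and endows the underlying principal bundle with a holomorphic structure on $\C^2$. The remaining datum $A_w$ defines a partial connection $\nabla_w := \partial_w + A_w$ in the $w$-direction; the equation $F_{\bar z w} = 0$ asserts $[\bar\partial_{\bar z} + A_{\bar z},\nabla_w] = 0$ (so $A_w$ depends holomorphically on $z$), while $F_{w\bar w} = 0$ asserts $[\bar\partial_{\bar w} + A_{\bar w},\nabla_w] = 0$ (so $A_w$ depends holomorphically on $\bar w$ relative to the holomorphic structure). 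Together these compatibilities say that $\nabla_w\,\d w$ is a holomorphic $\operatorname{End}(E)$-valued $(1,0)$-form in the $w$-direction, i.e.\ a holomorphic connection along the $w$-plane.

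Flatness of $\nabla_w$ is then automatic, since a connection in a single complex direction has curvature $(\nabla_w)^2\,\d w \wedge \d w = 0$ identically; this accounts for the parenthetical ``and therefore flat'' in the statement. Thus the EOM pin down exactly the data of a holomorphic bundle on $\C^2$ together with a holomorphic flat connection along the $w$-direction, as claimed.

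The only substantial obstacle is bookkeeping: confirming that the $\d z \wedge(-)$ prefactor in $\delta S$ projects onto precisely the three components $F_{w\bar w}, F_{\bar z w}, F_{\bar z\bar w}$ of $F(A)$, with no spurious additional equations coming from the component of $A$ in the $\d z$-direction that was quotiented out, and no missing equations. Once those three EOM are in hand, the geometric interpretation is a direct application of Newlander--Nirenberg (and its relative version for a partial connection), with no remaining analytic content.
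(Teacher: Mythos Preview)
Your proposal is correct and carries out in detail precisely what the paper leaves implicit: the paper offers no proof beyond noting that the statement is immediate from the presentation with fields $A \in \Omega^1/\d z \otimes \g$ and action $\int \d z\, CS(A)$, and your variation-of-$CS$ computation together with the geometric reading of $F_{\bar z\bar w}=F_{\bar z w}=F_{w\bar w}=0$ is exactly how one makes that immediate. One small wording slip: when you interpret $F_{w\bar w}=0$ you write that $A_w$ ``depends holomorphically on $\bar w$''; you mean holomorphically in $w$, i.e.\ that $\nabla_w$ is annihilated by the $\bar\partial_{\bar w}$-part of the Dolbeault operator on $\operatorname{End}(E)$.
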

\subsection{}
We have described our twisted, deformed $N=1$ theory at the classical level.  One can ask when if it can be quantized, i.e.\ if there are no quantum anomalies. (I use the term quantization in the sense of \cite{Cos11b, CosGwi11}: in particular, we are working in perturbation theory).

It turns out that it can be quantized essentially uniquely on a range of backgrounds.   Let $X$ be a complex surface equipped with a non-zero closed holomorphic $1$-form $\alpha$ (which plays the role of $\d z$). Then, we can define the twisted, deformed $N=1$ gauge theory on $X$. The fields are $\Omega^1(X) / \alpha \otimes \g$, and the action functional, as above, is
$$
S(A) = \int \alpha \wedge CS(A).
$$
\begin{proposition}
Suppose that $X$ is equipped with a holomorphic volume form.  Then, the twisted deformed $N=1$ theory on $X$ admits an essentially unique quantization.
\end{proposition}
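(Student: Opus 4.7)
The plan is to apply the cohomological BV obstruction theory of \cite{Cos11b,CosGwi11} to the classical theory at hand. First, I would promote the given classical action to a full BV theory: the minimal BV complex has the connection $A \in \Omega^1(X)/\alpha \otimes \g$, a ghost $c \in \Omega^0(X) \otimes \g$, and their shifted-dual antifields, with the $(-1)$-shifted symplectic pairing
\[
\omega(A_1, A_2) = \int_X \Omega_X \wedge \langle A_1, A_2 \rangle
\]
built from the holomorphic volume form $\Omega_X$ and the Killing form. The classical master equation reduces to the Chern--Simons identity $d\,CS(A) = \tfrac{1}{2}\langle F(A), F(A)\rangle$ together with ad-invariance of the Killing form, both of which are automatic.

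Second, I would pick a Hodge-type gauge-fixing operator built from $\bar\partial^*$ in the holomorphic direction and the de Rham codifferential along the topological direction cut out by $\alpha$, construct the heat-kernel regularised propagator, and run the standard order-by-order algorithm. At loop order $\hbar^g$, the obstruction to extending an order $g-1$ solution of the quantum master equation lives in the local degree-$1$ cohomology of the classical BV complex, while the space of homotopy classes of extensions is controlled by local degree-$0$ cohomology.

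Third, the crux is a vanishing theorem for these local cohomology groups. The theory is equivariant under translations in $X$, the $\operatorname{SU}(2)$ acting on the topological plane, global gauge transformations, and a $\C^\times$-scaling under which the Lagrangian carries definite weight fixed by the bi-degree of $\alpha$ and $\Omega_X$. Restricting to invariant cocycles and applying a Gelfand--Fuks style reduction collapses the space of potential cocycles to a finite-dimensional space of invariant polynomials on $\g$; the bi-weight constraint then forces vanishing outside a single candidate anomaly class analogous to the $\operatorname{Tr}(F^{\wedge 3})$ anomaly of holomorphic Chern--Simons. The presence of $\Omega_X$ trivialises the determinant line that would otherwise carry this class, killing it. The same computation in one degree lower shows that the space of nontrivial deformations is at most one-dimensional, corresponding to a rescaling of $\hbar$ which can be absorbed into $\alpha$; this accounts for the word ``essentially''.

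The principal obstacle is the one-loop computation, which cannot be handled inductively. I would deal with it by isolating the single potentially surviving Feynman diagram --- the triangle with three external gauge legs --- computing its contribution with the propagator above, and verifying that the holomorphic-volume-form factor makes the would-be anomaly a total derivative on field space, in direct analogy with the vanishing of the holomorphic Chern--Simons anomaly on a Calabi--Yau threefold. Every higher-loop obstruction is then killed uniformly by the symmetry-and-weight argument, and uniqueness up to the rescaling noted above follows from the parallel degree-$0$ vanishing.
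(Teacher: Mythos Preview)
Your overall framework---set up the BV complex, run the obstruction--deformation machinery of \cite{Cos11,CosGwi11}, and kill the relevant local cohomology groups by symmetry---is indeed the strategy the paper invokes. But several of your inputs are off, and one ingredient central to the paper's argument is missing.

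First, two technical points. The $(-1)$-shifted symplectic pairing is built from the one-form $\alpha$, not from the volume form: the quadratic part of the action is $\int_X \alpha \wedge A \wedge dA$, so the BV pairing is $\int_X \alpha \wedge \langle A_1, A_2\rangle$. Second, your gauge choice is precisely the one the paper warns against. In the pure $\bar\partial^*$ gauge the theory is one-loop exact, and this is what makes the anomaly analysis tractable; in the mixed gauge you propose (de Rham codifferential in the topological directions, $\bar\partial^*$ in the holomorphic one) the paper explicitly notes the theory is \emph{not} one-loop exact, so your ``every higher-loop obstruction is then killed uniformly'' step would require substantially more work than the triangle computation you outline.

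More seriously, two constraints that the paper uses to force vanishing are absent from your proposal. The paper does not compute the raw obstruction--deformation groups; it computes the groups controlling quantizations which (i) behave well in the limit $\alpha \to 0$, where the theory degenerates to holomorphic BF theory, and (ii) are compatible with the symmetries of $X$ preserving the holomorphic volume form. Condition (i) is what ties the problem to the already-understood BF case and is the real source of the uniqueness; you do not mention it. As for (ii), the symmetries you invoke---translations and an $\operatorname{SU}(2)$ acting on the topological plane---exist only on $\C^2$, not on a general complex surface $X$, so your Gelfand--Fuks reduction does not apply as stated. Finally, the paper asserts that the deformation group (subject to these constraints) actually \emph{vanishes}, so ``essentially unique'' refers to uniqueness under the stated compatibility conditions, not to a residual one-parameter $\hbar$-rescaling as you suggest.
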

This is proved in \cite{Cos13} by analysing the obstruction-deformation group controlling quantizations.  More precisely, I calculate the group which controls quantizations with certain additional properties: the quantization must behave well in the limit when $\alpha$ becomes zero, in which case the theory is holomorphic BF theory; and the quantization must be compatible with symmetries of $X$ preserving the holomorphic volume form.  I show that the group containing possible anomalies vanishes, as does the group containing possible deformations (compatible with these symmetries).  It follows that there are no anomalies and the quantum theory is unique. 

One fact which which makes this theory reasonably tractable is that in a certain gauge, it is one-loop exact.  The required gauge is where $\dbar^\ast A = 0$.  There are other natural gauges: we could require that $\d^\ast_{\Sigma_{top}} A = 0$ and that $\dbar^\ast_{\Sigma_{hol}} A = 0$.  In this gauge, the theory is not one-loop exact. 

The theory can be quantized on more general backgrounds.  One generalization, which we need later, is to consider complex surfaces $X$ equipped with a complex curve $D \subset X$ and a trivialisation of $K_X(2 D)$. The closed holomorphic $1$-form $\alpha$ used to define the action functional can have a second-order pole along $D$. The one-form $\alpha$ determines a holomorphic vector field by $V \vee \omega = \alpha$, where $\omega$ is the meromorphic volume form on $X$.  In this situation, we modifying the theory by requiring that the fields and gauge transformations are trivial on $D$.  Thus, the solutions to the equations of motion are holomorphic bundles on $X$, trivialised on $D$, together with a holomorphic connection in the direction given by the holomorphic vector field $V$. 

The only example of this more general version of the theory we will consider is when $X$ is the product of $\mbb{P}^1$ and an elliptic curve $E$, and the divisor $D$ is $\infty \times E$. Let $z$ be a coordinate on $\mbb{P}^1$ and $w$ be a coordinate on $E$. The meromorphic volume form is $\d z \d w$, the meromorphic one-form is $\d z$, and the holomorphic vector field $ V$ is $\partial_w$.  The solutions to the equations of motion in this case are holomorphic $G$-bundles on $\mbb{P}^1 \times E$ together with a holomorphic connection in the $E$-direction, where both bundle and connection are trivialised on $\infty \times E$.

\subsection{}
This deformed version of $N=1$ gauge theory has an invariant Wilson operator.  Recall that we wrote the theory in the first-order formalism, with an auxiliary field $B \in \Omega^2_+ \otimes \g$.  Once we have chosen a complex structure on $\R^4$, we can identify $\Omega^2_+$ as 
$$
\Omega^2_+ = \Omega^{2,0} \oplus \Omega^0 \cdot \omega \oplus \Omega^{0,2}
$$ 
where $\omega \in \Omega^{1,1}$ is the K\"ahler form coming from the metric on $\R^4$.    

\begin{lemma}
The connection in the $w$-plane defined by $A + B^{2,0} \d w$ is invariant under the supercharge $Q \in \mc{S}_+$. 
\end{lemma}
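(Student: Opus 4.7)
The plan is to reduce the statement to a scalar identity on a single $w$-line and then match coefficients. Writing $B^{2,0} = b\,\d z \wedge \d w \otimes \g$ and pulling the 1-form $A + B^{2,0}\,\d w$ back to a $w$-line $\{z = z_0,\, \bar z = \bar z_0,\, \bar w = \bar w_0\}$, the only surviving component is $(A_w + b)\,\d w$, so the claim becomes $\delta_Q A_w + \delta_Q b = 0$.

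First I would compute $\delta_Q A_w$ using the undeformed $N=1$ SUSY variation $\delta_Q A = \Gamma(Q,\psi_-)$, where $\Gamma: \mc{S}_+ \otimes \mc{S}_- \to \C^4$ is the Clifford pairing. Under the complex structure induced by $Q$ one has $\mc{S}_+ \cong \C \oplus \Omega^{0,2}$ with $Q$ spanning the $\C$-summand (because $Q \in \ker(\d z\cdot)$), and $\mscr{S}_- \cong \Omega^{1,0}$; in this decomposition $\Gamma(Q, -)$ is the identity embedding $\Omega^{1,0} \hookrightarrow \Omega^1$. Writing $\psi_- = \psi_-^z\,\d z + \psi_-^w\,\d w$, this gives $\delta_Q A_w = \psi_-^w$.

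Next I would compute $\delta_Q b$. The undeformed variation $\delta_Q B$ takes values in $Q \otimes \psi_+ \subset \mc{S}_+ \otimes \mc{S}_+$, which in the decomposition $\Omega^2_+ \cong \Omega^{2,0} \oplus \C\omega \oplus \Omega^{0,2}$ lives entirely in the last two summands (since $Q \in \C \subset \mc{S}_+$), so its $\Omega^{2,0}$-projection vanishes. The missing $\Omega^{2,0}$-contribution is supplied by the deformation identified in the preceding Remark, the map $\psi_- \mapsto \d z \wedge \psi_-$. Since $\d z \wedge (\psi_-^z\,\d z + \psi_-^w\,\d w) = \psi_-^w\,\d z \wedge \d w$, this gives $\delta_Q b = -\psi_-^w$, the sign being determined by the coefficient $\tfrac{1}{2\pi i}$ of $S'$ in $S_{\mathrm{deformed}}$.

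Combining, $\delta_Q(A_w + b) = 0$, and the remaining components of $\delta_Q A$ and $\delta_Q B$ (in $\d z, \d\bar z, \d\bar w$ and $\C\omega, \Omega^{0,2}$ respectively) pull back trivially to the $w$-line. The main technical obstacle is the sign check in the previous paragraph: that both $\delta_Q A_w$ and $\delta_Q b$ are proportional to $\psi_-^w$ is forced by representation theory, but the opposite-sign relation that produces cancellation rather than reinforcement is a consequence of the Noether procedure fixing the deformation of $\delta_Q$ so that $S_{\mathrm{deformed}}$ is $Q$-invariant. Concretely it amounts to matching the undeformed $\delta_Q$-variation of the fermion bilinear $2c \int \ip{\psi_+,\d z\,\psi_-}$ in $S'$ against the deformed $\delta_Q$-variation of the $\int F(A) \wedge B$ term in $S$; this calculation is carried out explicitly in \cite{Cos13}.
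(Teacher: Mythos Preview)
Your proposal is correct and carries out exactly the ``simple explicit computation'' the paper leaves implicit: the undeformed variation $\delta_Q A_w = \psi_-^w$ cancels against the deformed variation of $B^{2,0}$ coming from the map $\psi_- \mapsto \d z \wedge \psi_-$ recorded in the preceding Remark, with the relative sign fixed by the Noether procedure as you say. The paper gives no further detail beyond deferring to \cite{Cos13}, which is also what you do for the final sign check.
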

Again, this follows by a simple explicit computation.  

In this way, we can construct a classical $Q$-invariant Wilson operator in our theory, for any line in the $w$-plane and for every representation $V$ of $\g$. 
\begin{theorem}
Suppose that $V$ lifts to a representation of the Yangian $Y(\g)$. Then, the Wilson operator associated to $V$ lifts to a quantum Wilson operator, defined on straight lines in the $w$-plane. 
\end{theorem}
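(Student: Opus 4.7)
The plan is to quantize the classical Wilson operator order by order in $\hbar$ using the BV obstruction theory of \cite{Cos11b, CosGwi11}, and to identify the resulting obstruction complex on the Wilson line with the obstruction theory controlling lifts of $U(\g)$-modules to $Y(\g)$-modules. First I would work in the gauge $\dbar^\ast A = 0$, in which the bulk theory is one-loop exact, and expand the classical Wilson operator $W_V(\ell) = \op{Tr}_V \op{Hol}_\ell(A + B^{2,0}\d w)$ perturbatively along a straight line $\ell \subset \C_w$ over a fixed $z_0 \in \C_z$. The contribution at order $\hbar^n$ is a sum of Feynman diagrams with bulk propagators ending on insertions of $A + B^{2,0}\d w$ distributed along $\ell$; the potential anomaly is the failure of this correction to be $Q$-closed, viewed as a local functional on $\ell$ valued in $\op{End}(V)$.

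Next I would set up the relevant deformation-obstruction complex. Because the theory is topological in the $w$-direction containing $\ell$ and holomorphic in the transverse $z$-direction, the algebra of local operators at a point of $\ell$ carries a natural module structure over the $E_2$-algebra of bulk operators on a small transverse plane (one topological and one holomorphic direction). The core computation, to be carried out as in \cite{Cos13}, identifies this $E_2$-algebra with (a completion of) the Koszul dual of the Yangian $Y(\g)$; equivalently, line operators that can be placed on a straight line in the $w$-plane form a monoidal category equivalent to the category of $Y(\g)$-modules. Under this identification, the order-$\hbar^n$ obstruction to extending the classical line operator in representation $V$ matches precisely the obstruction to extending the $U(\g)$-action on $V$ to a $Y(\g)$-action through the $n$-th stage of the Yangian filtration.

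Given a full Yangian lift of $V$, these obstructions vanish by construction, and one uses the lift to write down canonical counterterms at each order cancelling the residual failure of $Q$-invariance; iterating produces a $Q$-invariant quantum Wilson operator to all orders. The main obstacle is the cohomological identification in the previous paragraph: one must compute the local cohomology of functionals on $\ell$ with coefficients in $\op{End}(V)$-valued polynomials in the bulk fields, and verify both that it is quasi-isomorphic, as a filtered deformation of $U(\g)$, to the Yangian in its Drinfeld presentation, and that no further independent anomalies appear. The hypothesis of straightness of $\ell$ enters crucially: translation invariance along $\ell$ localises the obstruction to an algebra of operators supported at a single point of $\ell$, and rules out framing anomalies that could otherwise obstruct quantization for bent or closed Wilson loops in a holomorphic/topological theory.
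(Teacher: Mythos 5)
Your proposal takes essentially the same route as the paper: both rest on the main abstract result of \cite{Cos13}, namely that the $E_2$-algebra of local operators in the topological $w$-plane is Koszul dual to the Yangian $Y(\g)$, so that line operators on straight lines correspond to $Y(\g)$-modules and a Yangian lift of $V$ produces the quantum Wilson operator. The order-by-order BV obstruction bookkeeping and the remark on straightness are sensible elaborations but do not change the substance of the argument.
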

This result is rather subtle, and relies on the main abstract result of \cite{Cos13}, which says that the operator product of the twisted, deformed $N=1$ gauge theory in the $w$-plane is controlled by the Yangian. In the case that $\g = \sl_n$, every representation lifts to a representation of the Yangian.

\section{Spin chains and $N=1$ gauge theory}
So far, I have described a general construction of integrable lattice models from holomorphic/topological four-dimensional theories equipped with line operators.  I have also explained how to construct such a theory together with a line operator from a deformation of $N=1$ supersymmetric gauge theory. The main result is that the integrable system which arises in this way is a certain spin-chain system. 

If $\g$ is a semi-simple Lie algebra, Drinfeld has constructed a Hopf algebra $Y(\g)$ which quantizes the universal enveloping algebra of $\g[[t]]$.  Further, he shows how to construct, from every finite-dimensional representation $V$ of $Y(\g)$, an integrable lattice model.  The $R$-matrix encoding the Boltzmann weights arises from the universal $R$-matrix 
$$
R(z) \in Y(\g) \otimes Y(\g) ((z))
$$
by applying the homomorphism from $Y(\g) \otimes Y(\g)$ to $\op{End}(V \otimes V)$.   We will call this integrable system the spin-chain model associated to $\g$ and $V$.

The simplest case of this construction is when $\g = \sl_2$ and $V$ is the fundamental representation, in which case we find the Heisenberg $XXX$ model. 

In the four-dimensional holomorphic/topological theory arising from $N=1$ gauge theory with gauge Lie algebra $\g$, there is a classical Wilson operator associated to every representation $V$ of $\g$. I have also mentioned that a lift of this to a quantum Wilson operator is the same as a lift of the representation of $\g[[t]]$ to a representation of the Yangian $Y(\g)$, which quantizes $U(\g[[t]])$. 

\begin{theorem}
The integrable lattice model arising from the twisted, deformed  $N=1$ gauge theory, with a Wilson operator coming from the representation $V$ of the Yangian $Y(\g)$, is spin-chain system constructed by Drinfeld from $\g$ and $V$. 
\end{theorem}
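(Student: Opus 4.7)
The plan is to reduce the statement to the identification of a single $R$-matrix. By the general construction of Section 4, the lattice model associated to our four-dimensional theory is completely determined, via the trace-and-compose recipe of the Lemma expressing $T$ as a trace of iterated $W$-composition of $\check R$, by the isomorphism
\[
\check R(V_p, V_q) : V_p \otimes V_q \iso V_q \otimes V_p
\]
obtained by sliding one Wilson line past another in the topological $w$-plane, both lines being labelled by $V$ and positioned at distinct points $p,q \in \Sigma_{hol}$. Drinfeld's spin-chain attached to $(\g,V)$ is built by the same recipe from $R_{V,V}(p-q)$, the image of the universal $R$-matrix $R(z) \in Y(\g) \otimes Y(\g)((z))$ in $\op{End}(V \otimes V)$. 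So it suffices to prove that
\[
\check R(V_p, V_q) = \sigma \circ R_{V,V}(p-q).
\]

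First I would pin down the classical (tree-level) limit. The twisted theory has fields $A \in \Omega^1(\C^2)/\d z \otimes \g$ with action $\int \d z \wedge CS(A)$; in the gauge $\dbar^\ast A = 0$ it is one-loop exact, and the gluon propagator between two points $(z_i,w_i)$ has a $1/(z_1-z_2)$ holomorphic singularity in the $z$-direction together with a distributional form in the topological $w$-directions. A single gluon exchanged between two parallel Wilson lines at $(z_1,\cdot)$ and $(z_2,\cdot)$, integrated along the lines, then produces the classical $r$-matrix
\[
r(z_1-z_2) = \frac{c}{z_1-z_2} \in (\g \otimes \g)\bigl[(z_1-z_2)^{-1}\bigr],
\]
where $c$ denotes the quadratic Casimir. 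This is precisely the classical $r$-matrix whose quantization defines the Yangian $Y(\g)$.

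Next, I would promote this to an all-orders identification using the main abstract result of \cite{Cos13}: the OPE algebra of local observables of the twisted, deformed $N=1$ theory in the $w$-plane is isomorphic to $Y(\g)$, and a quantum Wilson operator labelled by a representation $V$ of $\g$ is exactly a lift of $V$ to a $Y(\g)$-module. Granted this, the braiding $\check R(V_p, V_q)$ of two Wilson lines lies in $Y(\g)^{\otimes 2}((z_1-z_2))$, is translation-invariant and $\C^\times$-equivariant in $z$ by the rotational symmetries of the theory on $\C_z \times \C_w$, satisfies the Yang-Baxter equation by the Lemma of Section 4, and has the classical limit computed above. Drinfeld's characterization of the universal $R$-matrix of $Y(\g)$ by these properties then forces $\check R = \sigma \circ R$.

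The main obstacle is the input from \cite{Cos13}: identifying the OPE algebra of the twisted, deformed $N=1$ theory in the $w$-plane with $Y(\g)$, and matching the Hilbert space of a quantized Wilson line with the corresponding $Y(\g)$-module. This requires analysing the obstruction-deformation complex controlling quantizations of both the bulk theory and of a Wilson defect, identifying the quantum anomaly to quantize the Wilson operator with the obstruction to lifting $V$ from $\g[[t]]$ to $Y(\g)$, and verifying that the coproduct induced on observables by merging parallel Wilson lines in the $w$-plane agrees with the Yangian coproduct. Once the OPE algebra is identified with $Y(\g)$, the remaining ingredients --- Yang-Baxter, the spectral dependence through $z \in \Sigma_{hol}$, and the final matching with Drinfeld's lattice model --- follow formally from the framework of Sections 2--4.
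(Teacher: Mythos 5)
Your proposal follows essentially the same route as the paper's sketch: reduce the statement to the $\check R$-matrix arising from the OPE of two Wilson lines, compute its leading term $c/(z_1-z_2)$ from a single exchange between the lines, and combine the identification of the $w$-plane OPE (Koszul-dual) Hopf algebra with $Y(\g)$ and Drinfeld's uniqueness of the Yangian and its $R$-matrix, deferring the hard quantization analysis to \cite{Cos13}. The only cosmetic difference is ordering: the paper derives the Yangian identification via the $E_2$/Koszul-duality argument of Tamarkin, the classical computation of observables as $C^\ast(\g[[z]])$, and the one-loop calculation, whereas you quote that identification directly as the main result of \cite{Cos13}; the substance is the same.
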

As a corollary, we can compute the expectation value of Wilson operators in the twisted, deformed $N=1$ gauge theory.   Let $E$ be an elliptic curve, and consider the theory on $\C_z \times E_w$; so that we compactify the topological $w$-direction to the elliptic curve $E$.   For a point $z \in \C$, a representation $V$ of the Yangian, and an $a$- or $b$- cycle on the elliptic curve $E$, let $\chi_a(z,V)$ (respectively, $\chi_b(z,V)$) denote the Wilson operator in the representation $V$ placed on the circle $z \times a$ (respectively, $z \times b)$. Since the theory is topological on $E$, the Wilson operator only depends on the homotopy class of the circle in $E$. 

Then, we have the following.
\begin{corollary}
Let $a_1,\dots,a_m$ and $b_1,\dots,b_m$ be disjoint $a$- and $b$-cycles. Then, the vacuum expectation value 
$$
\ip{\chi_{a_1}(0,V), \dots ,\chi_{a_n} (0,V) ,\chi_{b_1}(z,V), \dots, \cdots_{b_m}(z,V)}
$$
is the partition function of the spin-chain integrable lattice model associated to $\g$ and $V$ on the $n \times m$ doubly periodic lattice, with spectral parameter $z$. 
\end{corollary}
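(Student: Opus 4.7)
The plan is to combine three ingredients already in place: the general dictionary of Section 4 that turns a holomorphic/topological 4d theory with line operators into a vertex model; the previous theorem identifying the $R$-matrix of this theory with the Drinfeld $R$-matrix of $Y(\g)$; and the elementary Lemma of Section 2 that expresses the vertex-model partition function as $\op{Tr}_{V^{\otimes n}} T^m$. The geometric setup is exactly the one needed: $\Sigma_{hol} = \C_z$ and the compactified topological surface $\Sigma_{top} = E$, with $V_0 = V_z = V$, and Wilson operators on $a$- and $b$-cycles playing the role of the ``vertical'' and ``horizontal'' line operators in the general construction.

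First I would use topological invariance in the $E$-direction to deform the cycles. Write $E$ topologically as $S^1_a \times S^1_b$. Because the theory is topological on $E$, after an isotopy we may assume each $a_i$-cycle has the form $\{w^b_i\} \times S^1_a$ and each $b_j$-cycle has the form $S^1_b \times \{w^a_j\}$, with distinct $w^a_j$ and distinct $w^b_i$. The $n$ $a$-cycles carry Wilson operators above $z=0$ and the $m$ $b$-cycles carry Wilson operators above the spectral point $z$, so the only locations where line operators meet the same fiber over $\C_z$ are the $nm$ transverse crossings on $E$, and these crossings form exactly the $n\times m$ doubly periodic lattice $L$. Next, using cutting/gluing along small arcs dual to the $a$- and $b$-cycles, decompose $E$ into $nm$ squares, each containing a single crossing. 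By the discussion of Section 4, each square contributes the $R$-matrix $\check R(V_0, V_z)$, which by the previous theorem is precisely Drinfeld's $\check R$-matrix at spectral parameter $z$ for the representation $V$ of $Y(\g)$. Gluing along $b$-arcs is $W$-composition of these $R$-matrices and gluing along $a$-arcs is taking the trace, so running the argument of Section 4.2 row-by-row produces $T(0,z)^m$ acting on $V^{\otimes n}$, and then closing up the $a$-direction yields $\op{Tr}_{V^{\otimes n}} T(0,z)^m$. By the Lemma of Section 2 this equals $Z(L, \check R)$, the partition function of the spin-chain model on $L$.

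The main obstacle is justifying that on the compact background $\C_z \times E$ the expectation value is genuinely computed by the local $R$-matrix data alone, without extra global contributions from the compactification. For this I would invoke the quantization proposition of Section 5 applied to $X = \C_z \times E$ with $\alpha = \d z$: the quantization exists and is unique, and in the gauge $\dbar^\ast A = 0$ the theory is one-loop exact, so perturbative contributions reduce to tree-level OPEs of the line operators plus the universal one-loop determinant, the latter contributing only an overall scalar that we absorb into the normalization of the $R$-matrix. The classical moduli space on $\C_z \times E$ (holomorphic bundles with flat $E$-connection) plays no role because we work perturbatively around the trivial bundle, so the diagrammatic computation of the expectation value is purely local at the crossings on $E$. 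Once this point is granted, the three ingredients above combine to give the claimed identification of $\langle \chi_{a_1}(0,V)\cdots\chi_{a_n}(0,V)\chi_{b_1}(z,V)\cdots\chi_{b_m}(z,V)\rangle$ with the spin-chain partition function on $L$ at spectral parameter $z$.
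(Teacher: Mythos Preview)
Your main argument is correct and is essentially what the paper has in mind: the paper does not give a separate proof of this corollary, treating it as immediate from the preceding theorem together with the general dictionary of Section~4 and the Lemma of Section~2. Your explicit unpacking---isotope the cycles on $E$ into a standard grid, cut $E$ into $nm$ squares each contributing $\check{R}(V_0,V_z)$, glue to get $\op{Tr}_{V^{\otimes n}} T(0,z)^m$, and identify the $R$-matrix with Drinfeld's via the theorem---is exactly the intended reasoning.

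Where you diverge from the paper is in the ``obstacle'' paragraph. You try to control global contributions on $\C_z\times E$ by appealing to one-loop exactness in a particular gauge and absorbing a one-loop determinant into the normalization of $\check{R}$. The paper handles this differently and more structurally: it works on $\mbb{P}^1_z\times E_w$ with the fields and gauge transformations required to be trivial along $\infty\times E$. With this boundary condition the trivial solution is isolated (no massless modes), so perturbation theory is well defined and, crucially, the category $\mc{C}(\Sigma_{hol})$ is $\op{Vect}$; this is precisely the hypothesis under which the Section~4 construction applies verbatim. Your one-loop argument does not by itself establish $\mc{C}(\Sigma_{hol})=\op{Vect}$, and the claim that non-local contributions reduce to an overall scalar is not justified without the $\mbb{P}^1$ compactification. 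I would replace that paragraph with the paper's argument: pass to $\mbb{P}^1_z\times E$ with trivialization at $\infty$, observe that the trivial bundle cannot be deformed, and conclude that the cutting-and-gluing computation of Section~4 applies on the nose.
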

\begin{remark}
\begin{enumerate}
\item I should explain what I mean by vacuum expectation value. As I remarked earlier, the theory is defined on $\mbb{P}^1_z \times E_w$, where all fields and gauge transformations are trivial on $\infty \times E_w$.  It is easy to check that there are no massless modes when we perturb around the trivial solution to the equations of motion (i.e. the trivial solution can not be deformed).  In fact, the trivial solution is the solution that is stable (in the sense of the theory of moduli of bundles).   It follows that we can define vacuum expectation values in perturbation theory. 
\item   If, instead of requiring that the bundle and connection are trivial at infinity, we ask that the monodromy around the $a$- and $b$-cycles at infinity are given by some fixed commuting elements of $G$, we find the partition function of the integrable lattice model on an $n\times m$ lattice with twisted boundary conditions. 
\item Another variation is to place the Wilson operators on $a$-cycles at independent points $z_1,\dots,z_n$ in $\C$, and the Wilson operators on $b$-cycles at independent points $z_1', \dots,z_m'$ in $\C$. We must have $z_i \neq z_j'$ for all $i$ and $j$.  This corresponds to introducing ``inhomogeneities'' in the integrable lattice model.
\end{enumerate}
\end{remark}

\subsection{}
Most of \cite{Cos13} is devoted to the proof of this correspondence between spin-chain systems and $N=1$ gauge theory, so I will only sketch the main idea here rather than giving details.  I will sketch the proof using more physical language than in \cite{Cos13}.

There is only one stable solution to the equations of motion of this field theory on $\mbb{P}^1_z \times \C_w$.  It follows that the category $\mc{C}(\mbb{P}^1_z)$ is just $\op{Vect}$, the category of vector spaces.

Therefore, the formal picture described earlier tells us that this field theory leads to some integrable system. A related formal argument, based on Koszul duality, tells us that underlying this integrable system is a Hopf algebra equipped with an $R$-matrix satisfying the Yang-Baxter equation. (I will say a little more about this below).  We can calculate the classically this Hopf algebra is $U(\g[[t]])$, so whatever Hopf algebra we find must be a quantization of this.   Drinfeld shows that the Yangian is the unique quantization of the Hopf algebra $U(\g[[t]])$ into a Hopf algebra with an $R$-matrix of the form
\begin{align*}
R &= 1 + \hbar \frac{c}{t_1 + z - t_2} + O(\hbar^2) \\
&= 1 + \hbar c \sum_{i \ge 0} z^{-i-1}(t_2 - t_1)^i + O(\hbar^2) \in U(\g[[t_1]]) \otimes U(\g[[t_2]]) ((z))[[\hbar]].
\end{align*}
Here $c \in \g \otimes \g$ is the quadratic Casimir.

The leading contribution to the $R$-matrix can be calculated explicitly in our theory, by calculating the leading contribution to the operator product of Wilson lines. This implies that the Hopf algebra must be the Yangian, and that the $R$-matrix must be the one constructed by Drinfeld. 

\subsection{} 
I should clarify why one finds a Hopf algebra, and why the classical limit of this Hopf algebra is $U(\g[[t]])$.  The operator product in the $w$-plane gives the space of local observables (i.e. supported on a point) the structure of an $E_2$ algebra.  A result of Tamarkin \cite{Tam07} shows that any augmented $E_2$ algebra can be turned into a Hopf algebra, by a procedure called Koszul duality. See \cite{CosSch13} for a discussion of this construction. 

One can easily calculate the classical local observables of the theory. Classical observables are functions on the equations of motion, which is the moduli of holomorphic bundles with a flat connection in the $w$-direction.  Locally, all such bundles are trivial. A naive analysis would then suggest that there are no local observables.  However, one has to take account of ghosts, arising from automorphisms of the trivial bundle.

On $D_z \times D_w$, the Lie algebra of infinitesimal automorphisms of the trivial such bundle is $\g \otimes \op{Hol}(D_z)$.  Replacing $D_z$ by a formal disc, we find $g[[z]]$. The space $\g[[z]]$ is therefore the space of ghosts that enter into local observables. Since ghosts are fermionic, we find that local observables are the exterior algebra on the dual of $\g[[z]]$.  This exterior algebra has a BRST differential, which can be identified with the Chevalley-Eilenberg differential.  Therefore, classical local observables are the Chevalley-Eilenberg cochain complex $C^\ast(\g[[z]])$. 

It is a standard result that the Koszul dual of the Chevalley-Eilenberg cochain complex of a Lie algebra is the universal enveloping algebra. It follows that the Hopf algebra arising from classical observables is $U(\g[[z]])$.  The Hopf algebra arising from quantum observables is a deformation of this. By Drinfeld's uniqueness result, together with a one-loop calculation, we find that the quantum Hopf algebra is the Yangian, as desired.  

\section{Integrable systems from $N=2$ theories}
Kapustin  observed that any $N=2$ field theory has enough supersymmetry to have a holomorphic/topological twist.  The constructions of this paper therefore imply that one can construct integrable systems from such a theory which encode the behaviour of line operators in the topological directions.  

Let $G$ be a compact Lie group acting on a hyperk\"ahler manifold $M$. To this data one can associate a classical $N=2$ theory, the four-dimensional gauged $\sigma$-model.  Before twisting, this theory is not renormalizable. However, one can show (along the lines of the analogous result in \cite{Cos13}) that the holomorphic/topological twist of this theory admits a unique quantization compatible with certain symmetries.  The twisted theory only depends on $M$ as a holomorphic symplectic manifold. In this way, one gets a very general construction of integrable systems and solutions to the Yang-Baxter equation. It is a very interesting problem to explicitly calculate the solutions to the YBE arising from particular Wilson and t'Hooft operators in these theories. The $R$-matrix is encoded in the OPE between the line operators. A related OPE was calculated by Moraru and Saulina in \cite{MorSau12}; the calculated the OPE in the topological direction of parallel Wilson and t'Hooft operators.

\def\cprime{$'$}

\end{document}